\newcommand{\rnd}{\mathrm{rnd}}
\newcommand{\cb}{\color{black}}
\global\long\def\R{\mathbb{R}}%
\global\long\def\L{\mathcal{L}}%
\global\long\def\K{\mathcal{K}}%
\global\long\def\D{\mathcal{D}}%
\global\long\def\sign{\textrm{sign}}%
\newcommand{\bi}{\begin{itemize}}
\newcommand{\ei}{\end{itemize}}
\newcommand{\bd}{\begin{description}}
\newcommand{\ed}{\end{description}}
\newcommand{\be}{\begin{enumerate}}
\newcommand{\ee}{\end{enumerate}}
\newtheorem{thm}{Theorem}
\newtheorem{defn}[thm]{Definition}
\newcommand{\bqn}{\begin{eqnarray}}
\newcommand{\eqn}{\end{eqnarray}}
\newcommand{\eqnn}{\nonumber\end{eqnarray}}
\newcommand{\eqnl}[1]{\label{#1}\end{eqnarray}}
\newcommand{\bga}{\begin{gather}}
\newcommand{\ega}{\end{gather}}
\newcommand{\ba}[1]{\begin{array}{#1}}
\newcommand{\ea}{\end{array}}
\newcommand{\beq}{\begin{equation}}
\newcommand{\eeq}{\end{equation}}
\newcommand{\bag}{\begin{aligned}}
\newcommand{\eag}{\end{aligned}}
 \theoremstyle{definition}
 \theoremstyle{plain}
 \theoremstyle{plain}
\crefname{hypothesis}{Hypothesis}{Hypotheses}
\title{Robustness of Delayed Higher Order Sliding Mode Control}
\author{Moussa Labbadi\thanks{Aix-Marseille University, LIS UMR CNRS 7020, 13013 Marseille, France
  (\email{moussa.labbadi@lis-lab.fr}.}
\and Denis Efimov \thanks{Inria, Univ. Lille, CNRS, UMR 9189 - CRIStAL, F-59000 Lille, France
  (\email{denis.efimov@inria.fr}.}
\and Leonid Fridman\thanks{ Facultad de Ingenieria, Universidad Nacional Autonoma de Mexico, Mexico City 04510, Mexico, and Department of Information Technologies and AI, Sirius University of Science and Technology, 354340 Sochi, Russia
  (\email{lfridman@unam.mx}.}}
\DeclareMathOperator{\diag}{diag}
\begin{document}

\maketitle

\begin{abstract}
In this paper,  the feasibility of recently developed higher order delayed sliding mode controllers is addressed. With this aim the robustness against the measurement noise and mismatched perturbations for the systems governed by such controllers is established using ISS implicit Lyapunov-Razumikhin function approach. To illustrate proposed results, a simulation example validating the efficiency of the method is provided.

\end{abstract}

\begin{keywords}
Lyapunov-Razumikhin method; Time delay; High order sliding mode control; Measurement noise.
\end{keywords}


\section{Introduction}

{\cb Sliding mode control (SMC) is one of the most popular techniques demonstrating two main advantages \cite{utkin1977variable,shtessel2014sliding}: theoretically exact compensation of matched perturbations and finite-time convergence to the sliding set (model reduction in a finite time). 
Further, the Higher Order SMC (HOSMC) algorithms \cite{shtessel2014sliding,lee2007chattering,levant1993sliding,hosm:Levant:2003,levant2005homogeneity,Levant2005b,bartolini2003survey,cruz2019higher,mercado2020discontinuous,sanchez2019design} ensuring the same quality properties for systems with arbitrary relative degree have been designed.  Unfortunately, the price to pay for the mentioned advantages of SMC is the usage of discontinuous controllers having infinite control gains, whose application requires an arbitrary high frequency of switching for actuators, which causes three key issues for successful implementation of SMC: $(i)$ chattering investigation and adjustment, where the main directions for chattering mitigation include adaptation of the SMC gains and utilization of observers to bypass infinite gains \cite{lee2007chattering,levant1993sliding,hosm:Levant:2003,levant2005homogeneity,Levant2005b,bartolini2003survey,utkin2017sliding,pilloni2012parameter,perez2019design,boiko2014relative}; $(ii)$ minimization of effects of unmatched perturbations, where the main directions are reasonable choice of sliding surfaces \cite{Choi1999,Castanos2006}, backstepping based SMC \cite{Estrada2017,Zhang2020} or observer-based compensation \cite{FerreiradeLoza2015}; $(iii)$ robustness with respect to the measurement noise, which is traditionally studied using homogeneity of SMC \cite{levant1993sliding,hosm:Levant:2003,levant2005homogeneity,Levant2005b,bartolini2003survey,mercado2020discontinuous,sanchez2019design}.

Recently, a new control law, so called delayed SMC algorithm \cite{efimov2016delayed}, was proposed that generates a continuous control signal for all initial functions different from zero and ensuring theoretically exact compensation of matching uncertainties (similarly to the conventional SMCs), but providing only hyperexponential convergence to the first order sliding set (the stability analysis in time-delay systems is considerably more complex than in ordinary differential equations due to the challenging task of designing Lyapunov-Krasovskii functionals or Lyapunov-Razumikhin functions \cite{gu2003introduction,kolmanovskii1986stability}). In the paper \cite{Labbadi2024} such algorithms were generalized to the systems with arbitrary relative degrees with corresponding attenuation of chattering using the Implicit Lyapunov function (ILF) approach \cite{polyakov2015finite,Polyakov2016} (for the noise-free setting). That is why the next step in development of the delayed HOSMCs is investigation of their robustness with respect to unmatched perturbations and measurement noises, which is the goal of this work.}

To illustrate this discussion and to motivate our paper, consider a disturbed double integrator system: 
\begin{equation}
\dot{x}_{1}(t)  =x_{2}(t),\;\
\dot{x}_{2}(t)  =u(t)+d(t),\;\
y(t)  =x(t)+w(t),
\label{eq:system}
\end{equation}
where $x(t)=[x_{1}(t),x_{2}(t)]^{\top}\in\mathbb{R}^{2}$ represents the state vector, $y(t)\in\R^{2}$ is the measured output, $u(t)\in\mathbb{R}$ is the control input, $d(t)\in\mathbb{R}$ denotes a bounded disturbance,
and $w(t)\in\mathbb{R}^{2}$ is a bounded noise signal.  First-order sliding mode controllers \cite{utkin1977variable}, $u(t)=-k\sign(\sigma(t))$ with $k>0$ sufficiently big, can exponentially stabilize the origin of the system \eqref{eq:system} (in the noise-free setting) by confining the system dynamics to a desired sliding surface $\sigma(t)=y_{2}(t)+ay_{1}(t)$, $a>0$ in a finite time. However, the chattering issue, marked by high-frequency oscillations of the control signal with the amplitude
$\pm k$ when trajectories linger near the sliding surface, poses
risks of actuator damage and transient performance degradation
\cite{utkin1977variable,shtessel2014sliding}. To this end, various
methods have been devised to mitigate chattering \cite{lee2007chattering}.
High-order SMCs (HOSMCs), as widely applied super-twisting control law $u(t)=-\ell_{1}\sqrt{|\sigma(t)|}\sign(\sigma(t))+z(t)$,
$\dot{z}(t)=-\ell_{2}\sign(\sigma(t))$ with $\ell_{1},\ell_{2}>0$ is applicable for bounded and sufficiently smooth matched disturbances, representing one of the most prominent approaches for chattering reduction. However, implementation of existing HOSMC algorithms is challenging due to the limited availability of constructive procedures for tuning of control parameters and discrete-time realization, especially beyond the second-order SMC \cite{pilloni2012parameter,perez2019design}.

For constructive design in nonlinear cases, the ILF approach \cite{MCloskey1997,adamy2004soft} is frequently employed (the controllability function method was introduced in \cite{korobov1979general} for control design). The noise and unmatched disturbance robustness analysis (in the input-to-state stability (ISS) sense \cite{sontag2007input}) is usually performed using the robust features of stable homogeneous systems \cite{Bernuau2013,Levant2016}. Application of ILF-based control laws requires on-line computation of the value $V_{y}(y(t))$ of this function \cite{polyakov2015finite}:
$u(t)=-k_{1}\frac{y_{1}(t)}{V_{y}^{2}(y(t))}-k_{2}\frac{y_{2}(t)}{V_{y}(y(t))}$,
where $p_{11}y_{1}^{2}(t)+2p_{12}V_{y}(y(t))y_{1}(t)y_{2}(t)+p_{22}V_{y}^{2}(y(t))y_{2}^{2}(t)=V_{y}^{4}(y(t))$,
and $k_{1},k_{2},p_{11},p_{22}>0$ with $p_{12}\in\R$ are suitably
tuned for \eqref{eq:system}. Such a control generates chattering
at the origin as the quasi-continuous SMC analogues \cite{Levant2005b}. 

For the \emph{measurement noise-free case}, in \cite{Labbadi2024} the ILF-based HOSMC from \cite{polyakov2015finite,Polyakov2016} was developed for \eqref{eq:system}
\begin{equation}
\begin{aligned}u(t)=-k_{1}\frac{y_{1}(t)}{\Psi^{2}(y_{t})}-k_{2}\frac{y_{2}(t)}{\Psi(y_{t})},\end{aligned}
\label{eq:u2}
\end{equation}
where 
\begin{gather}
\Psi(y_{t})=\max\left[V_{y}\left(y(t)\right),e^{1-\chi}\min\left[\underset{-\eta\leq\theta\leq0}{\max}V_{y}\left(y(t+\theta)\right),\right.\right.\nonumber \\
\left.\left.\left(\underset{-\eta\leq\theta\leq0}{\max}V_{y}\left(y(t+\theta)\right)\right)^{\chi}\right]\right]\label{eq:Psi}
\end{gather}
with $\chi>1$ and $\eta>0$ being tuning parameters (the ILF $V_y(y(t))$ is calculated as above). {\cb For $w\equiv0$,} this control provides uniform (independently in a properly bounded matched disturbance $d$) hyperexponential (faster than any exponential) convergence to the origin, being uniformly upper
bounded as other HOSMC. The tuning of control parameters and stability conditions are expressed using Linear Matrix Inequalities (LMIs), facilitating straightforward adjustments for convergence times and for magnitudes of counteracted disturbances.
{\cb In (\ref{eq:u2}), (\ref{eq:Psi}) the chattering happens if $\Psi(y_t)=0$, which corresponds to the situation when $y(t+\theta)=0$ for all $\theta\in[-\eta,0]$ only, while the origin is reached asymptotically with a hyperexponential convergence rate. Therefore, the chattering has a much lower chance to be produced. Moreover,} the behavior of trajectories with a hyperexponential rate of convergence is difficult to distinguish in simulations/experiments with the finite-time decaying ones. Since \eqref{eq:u2} generates a continuous signal and the trajectories of the system \eqref{eq:system}, \eqref{eq:u2} go to the origin, it implies that \eqref{eq:u2} has to approach $-d(t)$ or its average, i.e., performing identification of the matched uncertainty.

In the present note, the robustness against measurement noises $w(t)$ {\cb and additional mismatched perturbations}
is analyzed for the control method of \cite{Labbadi2024}. {\cb The technical difficulty consists in impossibility of exploitation of the properties of homogeneous systems, as it is usually performed \cite{polyakov2015finite,Polyakov2016} since the system is not homogeneous due to $\chi>1$, and in addition,
it contains delays.} The Lyapunov-Razumikhin function approach will be applied.


\section*{Notation}
\begin{itemize}
\item The set of real numbers is denoted by $\mathbb{R}$, then $\mathbb{R}_{+}=\{x\in\mathbb{R}:x\geq0\}$
and $\mathbb{R}_{+}^{\star}=\R_{+}\setminus\{0\}$ are the sets of
nonnegative and positive reals, respectively. The Euclidean norm of
a vector $x\in\R^{n}$ is denoted by $|x|$. 
\item A diagonal matrix with elements $\nu_{i}$, $i=1,\dots,n$ on the
main diagonal is denoted by $\diag\{\nu_{i}\}_{i=1}^{n}$. 
\item For a symmetric matrix $P\in\R^{n\times n}$, the minimum and maximum
eigenvalues are represented by $\lambda_{\text{min}}(P)$ and $\lambda_{\text{max}}(P)$. 
\item Denote the identity matrix of dimension $n\times n$ by $I_{n}$.
\item For two metric spaces $X$ and $Y$, the set of continuous maps between
them is denoted by $C(X,Y)$. The Banach space of continuous functions
$C([-\tau,0],\R^{n})$ with a finite $\tau>0$ will be denoted by
$C_{\tau}$ and equipped with the uniform norm $\|\varphi\|=\underset{-\tau\leq\xi\leq0}{\max}|\varphi(\xi)|$
for $\varphi\in C_{\tau}$. 
\item For a (Lebesgue) measurable function $d:\mathbb{R}_{+}\to\mathbb{R}^{s}$
and $[t_{0},t_{1})\subset\R_{+}$ define the norm $\|d\|_{[t_{0},t_{1})}=\text{ess\ sup}_{t\in[t_{0},t_{1})}\Vert d(t)\Vert$,
then $\|d\|_{\infty}=\|d\|_{[0,+\infty)}$ and the set of $d$ with
the property $\|d\|_{\infty}<+\infty$ we denote as $\mathcal{L}_{\infty}^{s}$
(i.e., this is the set of essentially bounded measurable functions). 
\item For a locally Lipschitz continuous function $V:\mathbb{R}^{n}\to\mathbb{R}_{+}$,
the upper directional Dini derivative is defined as follows: 
\(
D^{+}V(x)v=\limsup\limits _{h\to0^{+}}\frac{V\left(x+hv\right)-V\left(x\right)}{h}
\) for any $x\in\R^{n}$ and $v\in\R^{n}$.
\item A function $\sigma\in C(\mathbb{R}_{+},\mathbb{R}_{+})$ belongs to
class $\mathcal{K}$ if it is strictly increasing and $\sigma(0)=0$;
it additionally belongs to class $\mathcal{K}_{\infty}$ if it is
also unbounded. A function $\beta\in C(\mathbb{R}_{+}\times\mathbb{R}_{+},\mathbb{R}_{+})$
belongs to class $\mathcal{KL}$ if $\beta(\cdot,r)\in\mathcal{K}$
for any $r\in\R_{+}$ and $\beta(r,\cdot)$ is decreasing to zero
for any $r>0$.
\item A function $\hbar\in C(\mathbb{R}_{+}^{\star}\times\mathbb{R}_{+}^{\star},\mathbb{R})$
is said to be of the class $\mathcal{IK}_{\infty}$ (implicit $\K_{\infty}$)
\cite{polyakov2015implicit} if: 
1) for any $s\in\mathbb{R}_{+}^{\star}$ there exists $\alpha\in\mathbb{R}_{+}^{\star}$
such that $\hbar(\alpha,s)=0$; 
2) for any fixed $s\in\mathbb{R}_{+}^{\star}$, the function $\hbar(\cdot,s)$
is strictly decreasing on $\mathbb{R}_{+}^{\star}$; 
3) for any fixed $\alpha\in\mathbb{R}_{+}^{\star}$, the function $\hbar(\alpha,\cdot)$
is strictly increasing on $\mathbb{R}_{+}^{\star}$; 
4) for all $(\alpha,s)\in\Gamma=\{(\alpha,s)\in\mathbb{R}_{+}^{\star}\times\mathbb{R}_{+}^{\star}:\hbar(\alpha,s)=0\}$:
\(
\lim_{s\to0^{+}}\alpha=0,\quad\lim_{\alpha\to0^{+}}s=0,\quad\lim_{s\to+\infty}\alpha=+\infty.
\)
\end{itemize}

\section{Problem Statement}

\label{sec:II}

This paper addresses the problem of uniform stabilization at the origin
with an accelerated (hyperexponential) rate of convergence for a linear
single-input system with matched {\cb and mismatched} perturbations of the form: 
\begin{align}
\dot{x}(t) & =Ax(t)+b(u(t)+d(t)){\color{black}+\delta(t)},\quad t\geq0,\label{eq:LS}\\
y(t) & =x(t)+w(t), 
\end{align}
where 
\[
A=\begin{bmatrix} 0 & 1 & 0 & \cdots & 0\\
0 & 0 & 1 & \cdots & 0\\
\vdots & \vdots & \vdots & \ddots & \vdots\\
0 & 0 & 0 & \cdots & 1\\
0 & 0 & 0 & \cdots & 0
\end{bmatrix},\quad b=\begin{bmatrix} 0\\
0\\
\vdots\\
0\\
1
\end{bmatrix},
\]
$x(t)\in\mathbb{R}^{n}$ is the state vector, $u(t)\in\mathbb{R}$
is the control input; $d\in\L^1_{\infty}$ represents the matched disturbances,
where $\|d\|_{\infty}\leq\Delta$ for a given $\Delta>0$ (further,
we denote by $\mathcal{D}$ the set of all inputs $d\in\L^1_{\infty}$
with $\|d\|_{\infty}<\Delta$, $D=\{d\in\R:|d|\leq\Delta\}$); {\color{black} the mismatched disturbance $\delta\in\L_{\infty}^n$ is acting on the main dynamics, and we will assume that $b^\top\delta(t)=0$ for all $t\geq0$, i.e., the matched component of $\delta$ is included in $d$;
}
the entire state vector $x(t)$ is assumed to be measured by the output
$y(t)\in\R^{n}$ subject to the noise $w\in\L_{\infty}^{n}$.  Recall that many controllable systems can be presented in the form (\ref{eq:LS}) also including the remaining terms in the disturbances {\cb $d$ and $\delta$ (which may be dependent on $x$ provided that their uniform boundedness is kept).} 

{\color{black}It is required to find a control law that globally stabilizes \eqref{eq:LS}
at the origin uniformly in any disturbance $d\in\mathcal{D}$ while
$\|w\|_{\infty}=\|\delta\|_{\infty}=0$, and achieving ISS property for $\delta\in\L_{\infty}^n$ and $w\in\mathcal{L}_{\infty}^{n}$,
providing an accelerated rate of convergence. 
Nevertheless, we allow the control function $u$ to
be dependent on the past values of the output $y(t)$ being discontinuous with respect to $x$, hence, this paper utilizes
the theory of \cite{Kolmanovskii1999} to define the solutions.}

\section{Preliminaries}

\label{sec:III}

In this section, first, the HOSMC design based on ILF is recalled,
next for time-delay systems several definitions of stability with
accelerated convergence rates are given, together with the Lyapunov-Razumikhin
conditions establishing these properties. Finally, the ISS definition
with related Lyapunov-Razumikhin result are formulated.

Following the problem statement, we look for the control design ensuring
an accelerated convergence in the noise-free case and the ISS property
in the presence of a bounded noise $w$ {\cb and state perturbation $\delta$}, uniformly with respect to
the matched disturbance $d$, while the definitions and the Lyapunov-Razumikhin
conditions presented below consider these properties separately, as
they usually given in the literature. In the paper we will straightforwardly
combine them to get the required result. 

\subsection{ILF-based HOSMC}

The regulators obtained in \cite{polyakov2015finite,Polyakov2016}
via the ILF approach can be chosen based on the following conditions\footnote{The definitions of conventional stability concepts for ordinary differential
equations, as \eqref{eq:LS}, can be found in \cite{Khalil2002,Efimov2021}.}: 
\begin{thm}
\label{thm:HOSM_ctrl} \cite{Polyakov2016} Take the control in the
form 
\begin{equation}
u(y)=YX^{-1}D_{\mathbf{r}}\left(V_{y}^{-1}(y)\right)y\label{eq:HOSM_ctrl}
\end{equation}
with $D_{\mathbf{r}}\left(\lambda\right)=\diag\{\lambda^{r_{i}}\}_{i=1}^{n}$
for $\mathbf{r}=(r_{1},\dots,r_{n})=(n,...,1)$ and $X\in\R^{n\times n}$,
$Y\in\R^{1\times n}$ solving the following LMIs 
\begin{gather}
\left\{ \begin{array}{l}
X\succ0,\;\left[\begin{array}{cc}
\Upsilon & b\\
b^{\top} & -\varrho_{2}
\end{array}\right]\preceq0,\\
\left[\begin{array}{cc}
XG_{\mathbf{r}}+G_{\mathbf{r}}X & X\\
X & I_{n}
\end{array}\right]\succeq0,
\end{array}\right.\label{eq:LMI_imp_Rn}\\
\Upsilon=XA^{\top}+AX+Y^{\top}b^{\top}+bY+\varrho_{1}(XG_{\mathbf{r}}+G_{\mathbf{r}}X),\nonumber 
\end{gather}
where $\varrho_{1}>\varrho_{2}\Delta^{2}>0$ and $G_{\mathbf{r}}=\text{diag}\{r_{i}\}_{i=1}^{n}$,
while the ILF $V_{y}:\R^{n}\to\R_{+}$ is defined as 
\begin{gather}
Q(V_{y}(y),y)=0,\nonumber \\
Q(V,y)=y^{\top}D_{\mathbf{r}}\left(V^{-1}\right)X^{-1}D_{\mathbf{r}}\left(V^{-1}\right)y-1\label{eq:Q}
\end{gather}
for $y\ne0$ and $V_{y}(0)=0$ otherwise. Then on all trajectories
of the system \eqref{eq:LS}, \eqref{eq:HOSM_ctrl} with $w\equiv0$ {\color{black}and $\delta\equiv0$}:
\[
\frac{d}{dt}V_{y}(x(t))\leq-(\varrho_{1}-\varrho_{2}\Delta^{2})
\]
for almost all $t\geq0$ while $V_{y}(x(t))\ne0$. 
\end{thm}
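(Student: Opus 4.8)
The statement is the implicit-Lyapunov-function (ILF) finite-time stabilizer of \cite{Polyakov2016}, and I would recap its proof in three steps. First, I would verify that the ILF $V_y$ is well defined and regular: for fixed $y\ne0$ the map $V\mapsto Q(V,y)$ is continuous on $\R_+^\star$, tends to $+\infty$ as $V\to0^+$ (because $\Dr(V^{-1})$ blows up while $X^{-1}\succ0$) and to $-1$ as $V\to+\infty$; moreover a direct differentiation gives $\partial_V Q(V,y)=-V^{-1}\zeta^{\top}(\Gr X^{-1}+X^{-1}\Gr)\zeta$ with $\zeta:=\Dr(V^{-1})y$, and the last LMI in \eqref{eq:LMI_imp_Rn}, through a Schur complement with respect to its $I_n$ block (equivalently $X\Gr+\Gr X\succeq X^2$), yields $X\Gr+\Gr X\succ0$, hence $\Gr X^{-1}+X^{-1}\Gr\succ0$ and $\partial_V Q<0$ for every $y\ne0$. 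Therefore $Q(\cdot,y)$ has a unique positive root, $V_y$ is well defined, $C^1$ on $\R^n\setminus\{0\}$ by the implicit function theorem, and continuous on $\R^n$ with $V_y(0)=0$. Since the claim only concerns instants with $V_y(x(t))\ne0$ and the feedback is discontinuous solely at the origin, along the solutions understood in the sense of \cite{Kolmanovskii1999} the value $u(x(t))$ is unambiguous there and $t\mapsto V_y(x(t))$ is locally absolutely continuous on such intervals.

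Second, I would differentiate $V_y$ along the closed loop. For almost every $t$ with $x(t)\ne0$, implicit differentiation of $Q(V_y(x(t)),x(t))=0$ gives $\frac{d}{dt}V_y(x(t))=-(\partial_V Q)^{-1}\,\nabla_y Q\cdot\dot x(t)$, evaluated at $(V_y(x(t)),x(t))$, with $\nabla_y Q=2\Dr(V^{-1})X^{-1}\zeta$. Inserting $\dot x=Ax+b(u(x)+d)$ (here $w\equiv0$, $\delta\equiv0$) with $u(x)=YX^{-1}\zeta$, and using the dilation identities $\Dr(\lambda^{-1})A\Dr(\lambda)=\lambda^{-1}A$ and $\Dr(\lambda^{-1})b=\lambda^{-1}b$ (valid because $\mathbf{r}=(n,\dots,1)$), all powers of $V$ cancel; writing $\eta:=X^{-1}\zeta$, so that the constraint $Q(V_y(x),x)=0$ reads $\eta^{\top}X\eta=1$, one is left with
\[
\frac{d}{dt}V_y(x(t))=\frac{\eta^{\top}\big(AX+XA^{\top}+bY+Y^{\top}b^{\top}\big)\eta+2d\,b^{\top}\eta}{\eta^{\top}\big(X\Gr+\Gr X\big)\eta}.
\]

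Third, I would estimate this quotient. Evaluating the quadratic form of the matrix inequality containing $\Upsilon$ in \eqref{eq:LMI_imp_Rn} at $(\eta,d)$ gives $\eta^{\top}\Upsilon\eta+2d\,b^{\top}\eta-\varrho_2 d^2\le0$, which, after substituting the definition of $\Upsilon$, becomes $\eta^{\top}(AX+XA^{\top}+bY+Y^{\top}b^{\top})\eta+2d\,b^{\top}\eta\le-\varrho_1\,\eta^{\top}(X\Gr+\Gr X)\eta+\varrho_2 d^2$; dividing by the positive denominator gives $\frac{d}{dt}V_y(x(t))\le-\varrho_1+\varrho_2 d^2/\big(\eta^{\top}(X\Gr+\Gr X)\eta\big)$. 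Using the last LMI once more together with the normalization $\eta^{\top}X\eta=1$ to bound the denominator below by $1$, and $|d|\le\Delta$, one obtains $\frac{d}{dt}V_y(x(t))\le-(\varrho_1-\varrho_2\Delta^2)$, which is negative since $\varrho_1>\varrho_2\Delta^2$.

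The two points I expect to require the most care are the well-posedness and $C^1$ regularity of the implicitly defined $V_y$ in the first step, and, in the last step, extracting the lower bound of $\eta^{\top}(X\Gr+\Gr X)\eta$ on the ellipsoid $\{\eta:\eta^{\top}X\eta=1\}$ from the last LMI; both are carried out in detail in \cite{Polyakov2016}. A structurally cleaner alternative is to observe that $x\mapsto Ax+bu(x)$ is $\mathbf{r}$-homogeneous of degree $-1$ (with $u$ being $\mathbf{r}$-homogeneous of degree $0$ and $V_y$ of degree $1$), so that $\frac{d}{dt}V_y$ seen as a function of $x$ is $\mathbf{r}$-homogeneous of degree $0$; it then suffices to prove the estimate on the single level set $\{V_y=1\}$ and extend it to all $x\ne0$ by dilation invariance.
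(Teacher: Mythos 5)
Your reconstruction is correct and follows exactly the intended route: the paper states this result by citation to \cite{Polyakov2016} without giving a proof, but the same ingredients --- strict monotonicity of $Q(\cdot,y)$ from $XG_{\mathbf{r}}+G_{\mathbf{r}}X\succ0$, the dilation identities $D_{\mathbf{r}}(s)A=sAD_{\mathbf{r}}(s)$ and $D_{\mathbf{r}}(s)b=sb$, implicit differentiation of $Q(V_{y}(x),x)=0$, and evaluation of the first LMI on the vector $(\eta,d)$ --- are all reused verbatim inside the proof of Theorem~\ref{theo:u}, and your computation agrees with them. The one step that does not follow literally from the displayed LMIs is bounding the denominator below by $1$: the Schur complement of the last LMI gives $XG_{\mathbf{r}}+G_{\mathbf{r}}X\succeq X^{2}$ (equivalently $PG_{\mathbf{r}}+G_{\mathbf{r}}P\succeq I_{n}$ with $P=X^{-1}$), which on the ellipsoid $\{\eta:\eta^{\top}X\eta=1\}$ only yields $\eta^{\top}(XG_{\mathbf{r}}+G_{\mathbf{r}}X)\eta\geq\eta^{\top}X^{2}\eta\geq\lambda_{\min}(X)$, not $\geq1$, so an additional normalization such as $X\succeq I_{n}$ (or an adjustment of the constant in the decay estimate) is implicitly required to reach $-(\varrho_{1}-\varrho_{2}\Delta^{2})$ exactly. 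You flag precisely this point and defer to \cite{Polyakov2016}, and the paper's own later invocation of $PG_{\mathbf{r}}+G_{\mathbf{r}}P\succeq I_{n}$ is used at the same level of detail, so you are at parity with the source; everything else (well-posedness and regularity of $V_{y}$, cancellation of the powers of $V$, the quotient estimate, and the homogeneity shortcut you mention at the end) is sound.
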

Thus, the control \eqref{eq:HOSM_ctrl} guarantees the global finite-time
stabilization of the origin independently on the disturbances $d\in\D$
in the absence of the measurement noise {\cb and mismatched terms}. Since the closed-loop system
\eqref{eq:LS}, \eqref{eq:HOSM_ctrl} is homogeneous of negative degree,
the ISS property with respect to {\cb $w,\delta\in\mathcal{L}_{\infty}^{n}$  with $\delta^\top b\equiv0$} follows
\cite{Bernuau2013}. The drawback of this regulator is the chattering
of the control signal appearing in a finite time once the trajectories
settled at the origin. {\color{black}
If the pair of matrices $(A, b)$ is controllable and the parameters $\varrho_1$ and $\varrho_2$ are properly chosen, then the LMI~(\ref{eq:LMI_imp_Rn}) is guaranteed to be feasible, see Proposition~12 of~\cite{Polyakov2016} for more details.
}

\subsection{Time-delay systems}

Consider a retarded functional differential equation \cite{kolmanovskii1986stability}:
\begin{equation}
\frac{dx(t)}{dt}=f(x_{t},d(t)),\quad t\geq0,\label{eq:systd}
\end{equation}
where $x(t)\in\mathbb{R}^{n}$ is the pointwise value of the state
vector $x_{t}\in C_{\eta}$, which is defined as $x_{t}(\theta)=x(t+\theta)$
for $-\eta\leq\theta\leq0$; $\eta>0$ is the delay; $d\in\L_{\infty}^{m}$
is the input, and with a slight abuse of notation we denote $\mathcal{D}=\{d\in\L_{\infty}^{m}:\|d\|_{\infty}<\Delta\}$,
$D=\{d\in\R^{m}:|d|\leq\Delta\}$ for a given $\Delta>0$. The functional
$f:C_{\eta}\times\mathbb{R}^{m}\to\mathbb{R}^{n}$ is upper semi-continuous
in the first argument, continuous in the second, and satisfies $f(0,0)=0$.
We consider system \eqref{eq:systd} with the initial condition $x_{0}\in C_{\eta}$.

It is known from theory of functional differential equations \cite{kolmanovskii1986stability}
that the system \eqref{eq:systd} with a locally Lipschitz $f$ has
a unique solution $x(t,x_{0},d)$ satisfying the initial condition
$x_{0}\in C_{\eta}$ for the input $d\in\L_{\infty}^{m}$, which is
defined on some finite time interval $[-\eta,T)$ (we will use the
notation $x(t)$ to reference $x(t,x_{0},d)$ if the origin of $x_{0}$
and $d$ is clear from the context). And the conditions of existence
of \emph{generalized} solutions for functional differential inclusions,
which can be obtained from convexification of \eqref{eq:systd}, can
be found in \cite{Kolmanovskii1999}. Further we will assume that
such solutions exist for \eqref{eq:systd}, and since in this work
the case of stability properties verified for all solutions issued
from a given initial condition is studied only (in the \emph{strong}
sense), the notation $x(t,x_{0},d)$ will be used to denote all solutions
from $x_{0}\in C_{\eta}$ with $d\in\D$.

\subsection{Stability definitions}

Let $\Omega$ be an open neighborhood of the zero function in $C_{\eta}$. 
\begin{defn}
\label{def:UGS} \cite{polyakov2015implicit,efimov2016delayed} The
trivial solution $x(t)=0$ of the system \eqref{eq:systd} is said
to be: 
\begin{itemize}
\item[(a)] \emph{uniformly Lyapunov stable} if there exists $\sigma\in\mathcal{K}$
such that for any $x_{0}\in\Omega$ and $d\in\D$, the solutions are
defined for all $t\geq0$ and satisfy $|x(t,x_{0},d)|\leq\sigma(\|x_{0}\|)$
for all $t\geq0$. 
\item[(b)] \emph{uniformly asymptotically stable} if it is uniformly Lyapunov
stable, \\ and $\lim_{t\to+\infty}|x(t,x_{0},d)|=0$ for any $x_{0}\in\Omega$
and $d\in\D$. 
\item[(c)] \emph{uniformly hyperexponentially} (or \emph{exponentially}) \emph{stable}
if it is uniformly Lyapunov stable, and there exist function $\Theta\in\mathcal{K}$
and a decay rate $\varkappa>0$, such that for all $t\geq0$ and any
$x_{0}\in\Omega$, $d\in\D$: $|x(t,x_{0},d)|\leq\Theta(\lVert x_{0}\rVert)e^{-e^{\varkappa t}}$
(or $|x(t,x_{0},d)|\leq\Theta(\lVert x_{0}\rVert)e^{-\varkappa t}$). 
\item[(d)] \emph{uniformly finite-time stable} if it is uniformly Lyapunov stable,
and for any $x_{0}\in\Omega$ and $d\in\D$ there exists $0\leq T<+\infty$
such that $x(t,x_{0},d)=0$ for all $t\geq T$. The functional $T_{0}(x_{0})=\sup_{d\in\D}\inf_{T\geq0}\{x(t,x_{0},d)=0\ \forall t\geq T\}$
is called the settling time of the system \eqref{eq:systd}. 
\end{itemize}
If $\Omega=C_{\eta}$, then the corresponding properties are termed
\emph{global uniform Lyapunov stability (GULS)}/\emph{asymptotic stability
(GUAS)}/\emph{hyperexponential stability (GUHeS)}/\emph{finite-time
stability (GUFTS)}. 
\end{defn}
For the forthcoming analysis, we will need the Lyapunov-Razumikhin
theorem, whose conventional formulation can be found in \cite{gu2003introduction,kolmanovskii1986stability},
and in a recent work \cite{nekhoroshikh2022hyperexponential} this
method was extended to hyperexponential stability (in that paper the
disturbance-free setting was considered, but all conditions and proofs
save their meaning for the uniform stability properties after a direct
adaptation of the formulations): 
\begin{thm}
\label{theo:LRE} \cite{nekhoroshikh2022hyperexponential} Let there
exist two locally Lipschitz continuous \\  Lyapunov-Razumikhin functions
$V_{1},V_{2}:\mathbb{R}^{n}\to\mathbb{R}_{+}$ such that: 
\begin{itemize}
\item[$\mathcal{R}_{e1}$)] For some $\alpha_{1,i},\alpha_{2,i}\in\mathcal{K}_{\infty}$ with
$i=1,2$ and for all $x\in\mathbb{R}^{n}$, we have 
\[
\alpha_{1,i}(|x|)\leq V_{i}(x)\leq\alpha_{2,i}(|x|).
\]
\item[$\mathcal{R}_{e2}$)] There exist constants $\beta_{1}>0$ and $\beta_{2}>0$ such that
\[
\beta_{1}|x|\leq\alpha_{1,1}(|x|)\quad\text{for all}\quad\beta_{1}|x|\leq1,
\]
\[
\beta_{2}|x|\leq\alpha_{1,2}(|x|)\quad\text{for all}\quad\beta_{2}|x|>1.
\]
\item[$\mathcal{R}_{e3}$)] For all $x\in\mathbb{R}^{n}$ such that $V_{2}(x)\leq1$, we have
\[
V_{1}(x)\leq1.
\]
\item[$\mathcal{R}_{e4}$)] For some $\chi>1$, $\gamma>0$ and for all $x_{t}\in C_{\eta}$
being a solution of \eqref{eq:systd} with $d\in\D$, denoting $V_{i}(t)=V_{i}(x(t))$,
$i=1,2$ we have: 
\begin{itemize}
\item[(a)] If $V_{2}(t)>1$ and $\underset{-\eta\leq\theta\leq0}{\max}V_{2}(t+\theta)\leq V_{2}(t)^{\chi}e^{\chi-1}$,
then 
\[
\dot{V}_{2}(t)\leq-\gamma\ln\left(eV_{2}(t)\right)V_{2}(t);
\]
\item[(b)] If $V_{1}(t)\leq1$ and $\underset{-\eta\leq\theta\leq0}{\max}V_{1}(t+\theta)^{\chi}\leq V_{1}(t)e^{\chi-1}$,
then 
\[
\dot{V}_{1}(t)\leq\gamma\ln\left(\frac{V_{1}(t)}{e}\right)V_{1}(t).
\]
\end{itemize}
\end{itemize}
Then, the system \eqref{eq:systd} is GUHeS at the origin with the
decay rate $\min\left\{ \frac{\ln\chi}{\eta},\gamma\right\} $. 
\end{thm}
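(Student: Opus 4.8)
The plan is to combine two Razumikhin estimates into one: an \emph{outer} one, governed by $V_{2}$, showing that the sublevel set $\{V_{2}\le1\}$ is reached in finite time and is forward invariant thereafter, and an \emph{inner} one, governed by $V_{1}$, showing hyperexponential decay of $V_{1}$ inside $\{V_{1}\le1\}$; condition $\mathcal{R}_{e3}$ is the bridge, since $\{V_{2}\le1\}\subseteq\{V_{1}\le1\}$ means that on reaching the outer target one is automatically inside the inner set, with no extra invariance work. Bounds on $|x|$ are then read off from $V_{2}$ and $V_{1}$ via the sandwich $\mathcal{R}_{e1}$ and the one-sided linear lower bounds $\mathcal{R}_{e2}$, which are tailored respectively to the far-field regime $\beta_{2}|x|>1$ (used with $V_{2}$) and the near-field regime $\beta_{1}|x|\le1$ (used with $V_{1}$); since every hypothesis holds along every solution for all $d\in\mathcal{D}$, each estimate will be uniform in $d$. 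The first technical step is to linearise the two Razumikhin inequalities by logarithmic substitutions: on $\{V_{2}>1\}$ put $\psi(t)=\ln(eV_{2}(x(t)))$, so $\psi>1$ there, and then $\mathcal{R}_{e4}$(a) reads: if $\psi(t)>1$ and $\max_{-\eta\le\theta\le0}\psi(t+\theta)\le\chi\,\psi(t)$, then $\dot\psi(t)\le-\gamma\,\psi(t)$; on $\{V_{1}<e\}$ put $\mu(t)=\ln(e/V_{1}(x(t)))$, so $\{V_{1}\le1\}=\{\mu\ge1\}$, and then $\mathcal{R}_{e4}$(b) reads: if $\mu(t)\ge1$ and $\min_{-\eta\le\theta\le0}\mu(t+\theta)\ge\chi^{-1}\mu(t)$, then $\dot\mu(t)\ge\gamma\,\mu(t)$. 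Fix $\varkappa=\min\{\gamma,\ \eta^{-1}\ln\chi\}$, so that $e^{\varkappa\eta}\le\chi$ and $e^{-\varkappa\eta}\ge\chi^{-1}$.

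\emph{Outer phase.} Set $\bar\Psi=\max\{1,\ \max_{-\eta\le\theta\le0}\ln(eV_{2}(x_{0}(\theta)))\}$ and consider $\hat\psi(t)=\max\{1,\ln(eV_{2}(x(t)))\}\ge1$. The classical first-violation-time argument gives $\hat\psi(t)\le\max\{1,\bar\Psi e^{-\varkappa t}\}$ for all $t\ge0$: at a hypothetical first crossing $t^{*}$ of this comparison curve, if the curve still exceeds $1$ there then $\psi(t^{*})>1$, $D^{+}\psi(t^{*})\ge-\varkappa\psi(t^{*})$, and on the delay window $\psi(t^{*}+\theta)\le e^{\varkappa\eta}\psi(t^{*})\le\chi\psi(t^{*})$, so the transformed $\mathcal{R}_{e4}$(a) yields $\dot\psi(t^{*})\le-\gamma\psi(t^{*})\le-\varkappa\psi(t^{*})$ --- a contradiction; the level $V_{2}=1$ itself is handled by the standard limiting argument (forward invariance of $\{V_{2}\le1+\varepsilon\}$ as $\varepsilon\to0^{+}$), and everything is made rigorous with an $\varepsilon$-inflated comparison curve. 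This at once bounds $V_{2}(x(t))\le e^{\bar\Psi-1}$ for all $t$ --- which, through $\mathcal{R}_{e1}$--$\mathcal{R}_{e2}$, confines $|x(t)|$ to a $\mathcal{K}$-function of $\|x_{0}\|$ on large states and, together with the same invariance argument applied to the sets $\{V_{1}\le c\}$, $c\le1$ (via $\mathcal{R}_{e4}$(b)) for small states, gives global uniform Lyapunov stability --- and it yields $V_{2}(x(t))\le1$ for all $t\ge T_{1}:=\varkappa^{-1}\ln\bar\Psi$, whence $\mathcal{R}_{e3}$ gives $V_{1}(x(t))\le1$, i.e. $\mu(t)\ge1$, for all $t\ge T_{1}$.

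\emph{Inner phase and conclusion.} For $t\ge t_{1}:=T_{1}+\eta$ the whole delay window $[t-\eta,t]$ lies in $\{\mu\ge1\}$; put $b=\min_{t_{1}-\eta\le s\le t_{1}}\mu(s)\ge1$. Running the same first-violation-time argument downwards gives $\mu(t)\ge b\,e^{\varkappa(t-t_{1})}$ for all $t\ge t_{1}$: at a would-be first violation $t^{*}$, every window value obeys $\mu(t^{*}+\theta)\ge\mu(t^{*})e^{\varkappa\theta}\ge e^{-\varkappa\eta}\mu(t^{*})\ge\chi^{-1}\mu(t^{*})$, so the transformed $\mathcal{R}_{e4}$(b) yields $\dot\mu(t^{*})\ge\gamma\mu(t^{*})\ge\varkappa\mu(t^{*})$, contradicting $D^{+}\mu(t^{*})\le\varkappa\mu(t^{*})$. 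Hence $V_{1}(x(t))=e^{\,1-\mu(t)}\le e\exp(-b\,e^{\varkappa(t-t_{1})})$, and as $\mu(t)\ge1$ keeps $\beta_{1}|x(t)|\le1$, condition $\mathcal{R}_{e2}$ gives $|x(t)|\le\beta_{1}^{-1}V_{1}(x(t))\le\beta_{1}^{-1}e\exp(-b\,e^{\varkappa(t-t_{1})})$ for $t\ge t_{1}$. Since $T_{1}$, $t_{1}$, $b$ and the Lyapunov-stability gain are dominated by $\mathcal{K}$-functions of $\|x_{0}\|$ through $\mathcal{R}_{e1}$, merging this estimate with the bounded transient on $[0,t_{1}]$ and absorbing the constants into a single $\Theta\in\mathcal{K}$ produces $|x(t,x_{0},d)|\le\Theta(\|x_{0}\|)e^{-e^{\varkappa t}}$ with $\varkappa=\min\{\gamma,\eta^{-1}\ln\chi\}$, i.e. GUHeS at the claimed rate, uniformly over $d\in\mathcal{D}$.

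\emph{Main obstacle.} The two one-sided Razumikhin comparisons are, in themselves, routine; the real difficulties are (i) carrying them out in the \emph{strong} setting of \eqref{eq:systd} --- upper-semicontinuous right-hand side, generalized solutions, \emph{all} solutions issued from a given $x_{0}$ --- so that the first-violation-time argument has to be run with upper Dini derivatives and $\varepsilon$-inflated comparison curves, and the antecedents of $\mathcal{R}_{e4}$(a),(b) have to be verified along every solution; and (ii) the matching at the boundary $V_{2}=1$ / $V_{1}=1$, where the $\chi$-sized slack permitted in the Razumikhin windows must be reconciled with forward invariance of the outer target, and the exponent-of-exponent must be normalised with care so that the decay rate comes out \emph{exactly} as $\min\{\gamma,\eta^{-1}\ln\chi\}$ rather than merely some strictly smaller value.
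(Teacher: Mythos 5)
This theorem is imported verbatim from \cite{nekhoroshikh2022hyperexponential}; the paper you were given contains no proof of it, so there is nothing to compare against line by line. Judged on its own, your two-zone Razumikhin strategy (outer comparison with $V_{2}$ to reach and trap the sublevel set $\{V_{2}\le1\}$, the bridge $\mathcal{R}_{e3}$, then an inner comparison with $V_{1}$) is the natural and, as far as the structure goes, the correct route; your logarithmic substitutions are exact (condition (a) becomes $\max_{\theta}\psi(t+\theta)\le\chi\psi(t)\Rightarrow\dot\psi\le-\gamma\psi$ and condition (b) becomes $\min_{\theta}\mu(t+\theta)\ge\chi^{-1}\mu(t)\Rightarrow\dot\mu\ge\gamma\mu$), and the first-violation arguments, the forward-invariance of $\{V_{2}\le1\}$ and of the sublevel sets $\{V_{1}\le c\}$, $c\le1$, and the use of $\mathcal{R}_{e1}$--$\mathcal{R}_{e2}$ to translate back to $|x|$ are all sound.

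The genuine gap is the very last step, which you flag as a ``main obstacle'' but then wave away by ``absorbing the constants into a single $\Theta$.'' Your inner estimate is $|x(t)|\le\beta_{1}^{-1}e\exp\bigl(-b\,e^{\varkappa(t-t_{1})}\bigr)=\beta_{1}^{-1}e\exp\bigl(-c\,e^{\varkappa t}\bigr)$ with $c=b\,e^{-\varkappa t_{1}}$, where $b\ge1$ is all you can guarantee and $t_{1}=T_{1}+\eta>0$ grows with $\|x_{0}\|$; hence $c<1$ in general. To match Definition~\ref{def:UGS}(c) you would need $\beta_{1}^{-1}e\exp(-c\,e^{\varkappa t})\le\Theta(\|x_{0}\|)\exp(-e^{\varkappa t})$ for all $t\ge0$, i.e. $\ln\Theta(\|x_{0}\|)\ge 1-\ln\beta_{1}+(1-c)e^{\varkappa t}$, and the right-hand side diverges as $t\to+\infty$ whenever $c<1$. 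So no fixed $\Theta\in\mathcal{K}$ absorbs the mismatch: your argument delivers GUHeS with every decay rate $\varkappa'<\min\{\ln\chi/\eta,\gamma\}$ (the maximum of $e^{\varkappa' t}-c\,e^{\varkappa t}$ over $t\ge0$ is then finite and can be pushed into $\Theta$), or equivalently a bound of the form $\Theta(\|x_{0}\|)\exp\bigl(-c(\|x_{0}\|)e^{\varkappa t}\bigr)$, but not the stated rate $\min\{\ln\chi/\eta,\gamma\}$ in the exact form $\Theta(\|x_{0}\|)e^{-e^{\varkappa t}}$. You need either an additional normalisation device (e.g. a sharper lower bound on $b$ that compensates $e^{\varkappa t_{1}}$, which your estimates do not provide) or an honest weakening of the conclusion to rates strictly below $\min\{\ln\chi/\eta,\gamma\}$; as written, the final sentence of your proof asserts something your estimates do not support.
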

An extension to the ILF application is also given in \cite{nekhoroshikh2022hyperexponential}
being slightly modified in \cite{Labbadi2024}: 
\begin{thm}
\label{theo:ILF} Let there exist two continuous functions $Q_{1},Q_{2}:\mathbb{R}_{+}^{\star}\times\mathbb{R}^{n}\setminus\{0\}\to\mathbb{R}$
such that for $i=1,2$: 
\begin{itemize}
\item[$\mathcal{R}_{i1}$)] $Q_{i}$ is a continuously differentiable function. 
\item[$\mathcal{R}_{i2}$)] for any $x\in\mathbb{R}^{n}\setminus\{0\}$ there exists $V_{i}\in\mathbb{R}_{+}^{\star}$
such that $Q_{i}(V_{i},x)=0$; 
\item[$\mathcal{R}_{i3}$)] there exist $\hbar_{1,i},\hbar_{2,i},\in\mathcal{IK}_{\infty}$ such
that 
\begin{align*}
\hbar_{1,i}(V_{i},|x|) & \leq Q_{i}(V_{i},x)\leq\hbar_{2,i}(V_{i},|x|)
\end{align*}
for all $V_{i}\in\mathbb{R}_{+}^{\star}$ and $x\in\mathbb{R}^{n}\setminus\{0\}$; 
\item[$\mathcal{R}_{i4}$)] there are constants $a>0$ and $b>0$ such that 
\[
\hbar_{1,1}(a|x|,|x|)\geq0\quad\text{for all}\quad a|x|\leq1
\]
and 
\[
\hbar_{1,2}(b|x|,|x|)\geq0\quad\text{for all}\quad b|x|>1;
\]
\item[$\mathcal{R}_{i5}$)] $\frac{\partial Q_{i}(V_{i},x)}{\partial V_{i}}<0$ for all $V_{i}\in\mathbb{R}_{+}^{\star}$
and $x\in\mathbb{R}^{n}\setminus\{0\}$; 
\item[$\mathcal{R}_{i6}$)] $Q_{1}(1,x)=Q_{2}(1,x)$ for all $x\in\mathbb{R}^{n}$; 
\item[$\mathcal{R}_{i7}$)] for some $\chi>1$, $\rho>0$, and for all $\varphi\in C_{\eta}$
and $d\in D$, we have: 
\begin{itemize}
\item[(a)] for $\Omega_{1}:=\{(s,\phi)\in\mathbb{R}_{+}^{\star}\times C_{\eta}:Q_{1}(s,\phi(0))=0,Q_{1}(1,\phi(0))\leq0, \\ \underset{-\eta\leq\theta\leq0}{\max}Q_{1}(s^{\frac{1}{\chi}}e^{1-\frac{1}{\chi}},\phi(\theta))\leq0\}$,
\begin{gather*}
(V_{1},\varphi)\in\Omega_{1}\Rightarrow\frac{\partial Q_{1}(V_{1},\varphi(0))}{\partial x}f(\varphi,d)\\
\leq-\rho\ln(V_{1}/e)V_{1}\frac{\partial Q_{1}(V_{1},\varphi(0))}{\partial V};
\end{gather*}
\item[(b)] for $\Omega{}_{2}:=\{(s,\phi)\in\mathbb{R}_{+}^{\star}\times C_{\eta}:Q_{2}(s,\phi(0))=0,Q_{2}(1,\phi(0))>0, \\ \underset{-\eta\leq\theta\leq0}{\max}Q_{2}(se^{\chi-1},\phi(\theta))\leq0\}$,
\begin{gather*}
(V_{2},\varphi)\in\Omega{}_{2}\Rightarrow\frac{\partial Q_{2}(V_{2},\varphi(0))}{\partial x}f(\varphi,d)\\
\leq\rho\frac{\partial Q_{2}(V_{2},\varphi(0))}{\partial V}.
\end{gather*}
\end{itemize}
\end{itemize}
Then, the system \eqref{eq:systd} is GUHeS at the origin. 
\end{thm}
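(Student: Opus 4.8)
The natural route is to reduce Theorem~\ref{theo:ILF} to Theorem~\ref{theo:LRE}. First I would convert each implicit relation $Q_i(V_i,x)=0$ into an explicit function $V_i\colon\R^n\to\R_+$ and then verify the hypotheses $\mathcal{R}_{e1}$--$\mathcal{R}_{e4}$.

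\emph{Well-posedness and $\mathcal{R}_{e1}$--$\mathcal{R}_{e3}$.} For $x\neq0$, existence of a root of $Q_i(\cdot,x)$ is $\mathcal{R}_{i2}$ and uniqueness is $\mathcal{R}_{i5}$; since $Q_i(\cdot,x)$ is strictly decreasing one has the equivalences $Q_i(s,x)\le0\iff s\ge V_i(x)$ and $Q_i(1,x)\lessgtr0\iff V_i(x)\gtrless1$, which are the workhorses for everything that follows. Setting $V_i(0)=0$, the implicit function theorem together with $\mathcal{R}_{i1}$, $\mathcal{R}_{i5}$ gives $V_i\in C^1(\R^n\setminus\{0\})$, while $\mathcal{R}_{i3}$ gives both continuity at the origin and the sandwich of $\mathcal{R}_{e1}$: from $\hbar_{1,i}(V_i(x),|x|)\le Q_i(V_i(x),x)=0\le\hbar_{2,i}(V_i(x),|x|)$ and the implicit-$\mathcal{K}_{\infty}$ argument (see \cite{polyakov2015implicit}) one obtains $\alpha_{1,i},\alpha_{2,i}\in\mathcal{K}_{\infty}$ (the zero level sets of $\hbar_{1,i},\hbar_{2,i}$) with $\alpha_{1,i}(|x|)\le V_i(x)\le\alpha_{2,i}(|x|)$. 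Then $\mathcal{R}_{e2}$ drops out of $\mathcal{R}_{i4}$ with $\beta_1=a$, $\beta_2=b$, and $\mathcal{R}_{e3}$ from $\mathcal{R}_{i6}$ and monotonicity: $V_2(x)\le1\Rightarrow Q_2(1,x)\le0\Rightarrow Q_1(1,x)\le0\Rightarrow V_1(x)\le1$.

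\emph{The derivative condition $\mathcal{R}_{e4}$ and a reparametrization.} Along any generalized solution $x_t$ of \eqref{eq:systd} with $d\in\D$, on every interval where $x(t)\neq0$ I would differentiate the identity $Q_i(V_i(x(t)),x(t))\equiv0$ to get, for a.e.\ such $t$, $\dot V_i(t)=-\big(\tfrac{\partial Q_i}{\partial x}f(x_t,d(t))\big)\big/\tfrac{\partial Q_i}{\partial V}$ with $\tfrac{\partial Q_i}{\partial V}<0$. The equivalences above turn the defining conditions of $\Omega_1,\Omega_2$ into exactly the Razumikhin premises of $\mathcal{R}_{e4}$ (for $\Omega_1$, raising $\max_\theta V_1(t+\theta)\le V_1(t)^{1/\chi}e^{1-1/\chi}$ to the power $\chi$; for $\Omega_2$, $\max_\theta V_2(t+\theta)\le V_2(t)e^{\chi-1}$), and dividing the conclusions of $\mathcal{R}_{i7}$ by $-\tfrac{\partial Q_i}{\partial V}>0$ turns (a) into $\dot V_1(t)\le\rho\ln(V_1(t)/e)V_1(t)$, which is $\mathcal{R}_{e4}$(b) with $\gamma=\rho$ and the same $\chi$. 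Item (b), however, yields only $\dot V_2(t)\le-\rho$, which is weaker than the $-\gamma\ln(eV_2)V_2$ bound needed in $\mathcal{R}_{e4}$(a); so I would replace $V_2$ by $W_2=g\circ V_2$ with a $C^1$ reparametrization $g\in\mathcal{K}_{\infty}$ satisfying $g(v)=v$ on $(0,1]$ and $g'(v)=\ln(eg(v))g(v)$ on $[1,\infty)$ (i.e.\ $g(v)=e^{e^{v-1}-1}$ there). Then $\dot W_2(t)=g'(V_2)\dot V_2\le-\rho\ln(eW_2(t))W_2(t)$ for $V_2>1$; conditions $\mathcal{R}_{e1}$--$\mathcal{R}_{e3}$ survive since $g\in\mathcal{K}_{\infty}$, $g(1)=1$ and $g(v)\ge v$ for $v\ge1$; and the $\mathcal{R}_{e4}$(a)-premise for $W_2$ still forces membership in $\Omega_2$ once one observes $\ln(1+y)\le e^{y}-1$, so the choice $\chi_{\mathrm{LRE}}=\chi$ works. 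Theorem~\ref{theo:LRE} then delivers GUHeS.

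\emph{Main obstacle.} I expect the hard part to be $\mathcal{R}_{e4}$, and inside it two points: (i) justifying the chain-rule/Dini computation for the only-continuous-at-the-origin $V_i$ along generalized solutions of a retarded inclusion -- one must check that $V_i$ is locally Lipschitz (a standard corollary of $\mathcal{R}_{i1}$, $\mathcal{R}_{i3}$, $\mathcal{R}_{i5}$) or work with $D^{+}V_i$ directly, and handle the times where $x(t)=0$ via continuity and the sign structure of $\mathcal{R}_{i5}$; and (ii) the bookkeeping of the $e^{\chi-1}$ factors and the $\chi$, $1/\chi$ powers in matching $\Omega_1,\Omega_2$ to $\mathcal{R}_{e4}$ -- above all the reparametrization of $V_2$ that upgrades the finite-time estimate of $\mathcal{R}_{i7}$(b) into the hyperexponential one demanded by $\mathcal{R}_{e4}$(a). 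The remaining verifications are essentially routine once the $\mathcal{IK}_{\infty}$ machinery is invoked.
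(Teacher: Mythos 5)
The paper does not prove Theorem~\ref{theo:ILF} at all: it is quoted as a preliminary result from \cite{nekhoroshikh2022hyperexponential} (as modified in \cite{Labbadi2024}), so there is no in-paper argument to compare against. Your route --- reducing to the explicit Razumikhin result of Theorem~\ref{theo:LRE} by solving $Q_i(V_i,x)=0$ and translating each hypothesis via the monotonicity equivalence $Q_i(s,x)\le 0\iff s\ge V_i(x)$ --- is the natural one and is essentially what the cited references do. Your translations of $\mathcal{R}_{e1}$--$\mathcal{R}_{e3}$ and of $\Omega_1$ into $\mathcal{R}_{e4}$(b) are correct (the $\Omega_1$ history condition is exactly equivalent to $\max_\theta V_1(t+\theta)^\chi\le V_1(t)e^{\chi-1}$, and dividing $\mathcal{R}_{i7}$(a) by $-\partial Q_1/\partial V>0$ gives $\dot V_1\le\rho\ln(V_1/e)V_1$). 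More importantly, you correctly spotted the one genuinely nontrivial mismatch: for $V_2>1$, $\Omega_2$ only constrains the history by $V_2(t)e^{\chi-1}$ while $\mathcal{R}_{e4}$(a) demands the decrease on the larger set $\max_\theta V_2\le V_2(t)^\chi e^{\chi-1}$, and $\mathcal{R}_{i7}$(b) only yields $\dot V_2\le-\rho$ rather than $-\gamma\ln(eV_2)V_2$. Your reparametrization $W_2=g(V_2)$ with $g(v)=e^{e^{v-1}-1}$ on $[1,\infty)$ repairs both defects simultaneously, and the premise implication checks out: $\max_\theta W_2\le W_2(t)^\chi e^{\chi-1}$ unwinds to $\max_\theta V_2\le V_2(t)+\ln\chi\le V_2(t)e^{\chi-1}$ using $\ln\chi\le\chi-1\le e^{\chi-1}-1$ and $V_2(t)>1$, which places you in $\Omega_2$. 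The only loose end is the one you flag yourself: Theorem~\ref{theo:LRE} asks for \emph{locally Lipschitz} Razumikhin functions, and the hypotheses $\mathcal{R}_{i1}$--$\mathcal{R}_{i5}$ give $V_i\in C^1(\R^n\setminus\{0\})$ plus continuity at the origin via the $\mathcal{IK}_\infty$ sandwich, but not Lipschitz continuity at $x=0$ (the bound $a|x|\le V_1(x)$ from $\mathcal{R}_{i4}$ goes the wrong way for that); this must be handled either by working with Dini derivatives away from the origin and invoking continuity at reaching times, as you suggest, or by noting that the decrease conditions are only ever invoked on $x\ne0$. With that caveat made precise, the argument is complete.
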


\subsection{Robust stability}

A more detailed introduction to the ISS theory can be found in \cite{dashkovskiy2011input,sontag2007input}.
\begin{defn}
\label{def:ISStd} The system \eqref{eq:systd} is said to be ISS
if there exist $\beta\in\mathcal{KL}$ and $\iota\in\mathcal{K}$
such that 
\[
|x(t,x_{0},d)|\le\beta(\|x_{0}\|,t)+\iota(\|d\|_{\infty})
\]
for all $x_{0}\in C_{\eta}$, $d\in\mathcal{L}_{\infty}^{m}$ and
$t\geq0$. 
\end{defn}
If $d=0$, then we recover the conventional global asymptotic stability
 property \cite{dashkovskiy2011input}. 
\begin{defn}
\label{def:ISS_LRF} A locally Lipschitz continuous $V:\mathbb{R}^{n}\rightarrow\mathbb{R}_{+}$
is called an ISS Lyapunov-Razumikhin function (ISS-LRF) for \eqref{eq:systd}
if there exist $\sigma_{1},\sigma_{2}\in\mathcal{K}_{\infty}$ and
$\rho_{V},\rho_{d},\sigma\in\mathcal{K}$ such that the following
conditions are satisfied: 
\[
\sigma_{1}(|x|)\le V(x)\le\sigma_{2}(|x|)\quad\forall x\in\mathbb{R}^{n};
\]
\begin{gather}
V(\phi(0))\ge\max\left\{ \rho_{V}\left[\max_{\theta\in[-\eta,0]}V(\phi(\theta))\right],\rho_{d}(|d|)\right\} \label{eq:ISS_RL}\\
\Rightarrow D^{+}V(\phi(0))f(\phi,d)\le-\sigma(|\phi(0)|)\nonumber 
\end{gather}
for all $\phi\in C_{\eta}$ and $d\in\R^{m}$.
\end{defn}
\begin{thm}
\label{thm:Teel} \cite{teel1998connections} If there exists an ISS
Lyapunov-Razumikhin function $V$ for \eqref{eq:systd} and $\rho_{V}(s)<s$
for all $s>0$, then \eqref{eq:systd} is ISS.
\end{thm}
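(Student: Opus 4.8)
\emph{Proof plan.} The plan is to reproduce the Razumikhin-to-ISS argument of \cite{teel1998connections}, the role of the small-gain hypothesis $\rho_V(s)<s$ being to let a running maximum of $V$ along a solution serve as a non-strict Lyapunov--Krasovskii-type functional. Fix $x_0\in C_\eta$, $d\in\mathcal{L}_\infty^m$ and a solution $x(\cdot)=x(\cdot,x_0,d)$ of \eqref{eq:systd}; write $\psi(t):=V(x(t))$ with the convention $\psi(\theta):=V(x_0(\theta))$ for $\theta\in[-\eta,0)$, and set $g(t):=\max_{s\in[t-\eta,t]}\psi(s)$, $a:=\rho_d(\|d\|_\infty)$, $M_0:=g(0)\le\sigma_2(\|x_0\|)$. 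Throughout I use the Dini chain inequality $D^+(V\circ x)(t)\le D^+V(x(t))f(x_t,d(t))$, valid a.e.\ because $V$ is locally Lipschitz.

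\emph{Step 1: uniform bound.} I claim $g(t)\le\max\{M_0,a\}$ for all $t\ge0$, so that $\sigma_1(|x(t)|)\le V(x(t))\le\max\{\sigma_2(\|x_0\|),\rho_d(\|d\|_\infty)\}$; in particular $|x(t)|$ stays bounded, giving forward completeness. The point is that $D^+g(t)\le0$ whenever $g(t)>a$. Indeed, if $g(t)>\psi(t)$ this is the elementary fact that the running maximum of a locally Lipschitz function is locally non-increasing where the maximum is attained strictly inside the window; if $g(t)=\psi(t)=V(x(t))$, then $\max_{\theta\in[-\eta,0]}V(x(t+\theta))=g(t)=V(x(t))>0$, so $\rho_V$ of this maximum is $<V(x(t))$, while $V(x(t))=g(t)>a\ge\rho_d(|d(t)|)$, hence the premise of \eqref{eq:ISS_RL} holds at $\phi=x_t$ and $D^+(V\circ x)(t)\le-\sigma(|x(t)|)<0$, so $D^+g(t)\le[D^+(V\circ x)(t)]_+=0$. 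Consequently $g$ cannot cross the constant level $\max\{M_0,a\}$ from below, proving the claim.

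\emph{Step 2: attraction.} Next, the sublemma: if $\psi(s)\le\ell$ for all $s\ge T$ and $\ell'\in(\max\{\rho_V(\ell),a\},\ell)$, then $\psi(s)\le\ell'$ for all $s\ge T+\eta+(\ell-\ell')/c$, where $c:=\sigma(\sigma_2^{-1}(\ell'))>0$. Indeed, for $\tau\ge T+\eta$ with $\ell'\le\psi(\tau)\le\ell$ the whole window $[\tau-\eta,\tau]$ lies in $\{s:\psi(s)\le\ell\}$, so $\rho_V[\max_{\theta\in[-\eta,0]}V(x(\tau+\theta))]\le\rho_V(\ell)<\ell'\le V(x(\tau))$ and $\rho_d(|d(\tau)|)\le a<V(x(\tau))$; then \eqref{eq:ISS_RL} gives $D^+(V\circ x)(\tau)\le-\sigma(|x(\tau)|)\le-c$ (using $|x(\tau)|\ge\sigma_2^{-1}(V(x(\tau)))\ge\sigma_2^{-1}(\ell')$), so $\psi$ decreases at rate $\ge c$ while it lies in $[\ell',\ell]$, reaches $\ell'$ no later than $T+\eta+(\ell-\ell')/c$, and never re-enters $(\ell',\infty)$ afterwards. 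Now fix $q\in(0,1)$, put $\ell_0:=\max\{\sigma_2(\|x_0\|),a\}$ and $\ell_{k+1}:=q\max\{\rho_V(\ell_k),a\}+(1-q)\ell_k$; since $\rho_V(s)<s$ for all $s>0$ forces $0$ to be the only non-negative fixed point of $\rho_V$, the sequence $(\ell_k)$ strictly decreases with $\ell_k\downarrow a$, and each $\ell_{k+1}\in(\max\{\rho_V(\ell_k),a\},\ell_k)$. Starting from $\psi(s)\le\ell_0$ for all $s\ge-\eta$ (Step 1) and applying the sublemma repeatedly produces times $0=T_0\le T_1\le\cdots$, with $T_{k+1}-T_k$ depending only on $\ell_0,\ell_{k+1},\sigma,\sigma_2,\eta$, such that $|x(t)|\le\sigma_1^{-1}(\ell_k)$ for all $t\ge T_k$.

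\emph{Step 3: assembly, and the hard part.} If $\sigma_2(\|x_0\|)\le\rho_d(\|d\|_\infty)$, Step 1 already gives $|x(t)|\le\sigma_1^{-1}(\rho_d(\|d\|_\infty))$ for all $t\ge0$; otherwise $\ell_0=\sigma_2(\|x_0\|)$, the sequences $(\ell_k)$ and $(T_k)$ depend on $d$ only through $\rho_d(\|d\|_\infty)$, $\ell_k\downarrow\rho_d(\|d\|_\infty)$ and $T_k\to\infty$, and the family of bounds $|x(t)|\le\sigma_1^{-1}(\ell_k)$ $(t\ge T_k)$ is repackaged by the standard construction into $|x(t)|\le\max\{\beta(\|x_0\|,t),\sigma_1^{-1}(\rho_d(\|d\|_\infty))\}$ with $\beta\in\mathcal{KL}$; combining the two cases and bounding $\max$ by a sum yields Definition~\ref{def:ISStd} with $\iota:=\sigma_1^{-1}\circ\rho_d\in\mathcal{K}$. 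The main obstacle --- the part that is delicate rather than mechanical --- is Step 1: making the running-maximum computation rigorous (the interior-versus-endpoint dichotomy for $D^+g$, and the Dini chain inequality for a merely locally Lipschitz $V$ along a solution driven by a merely measurable $d$), since it is exactly the gap between ``$V$ evaluated at $x(t)$'' and ``$V$ evaluated at the worst point of the delay interval'' that the condition $\rho_V(s)<s$ is there to close; once Step 1 is secured, Steps 2 and 3 are routine.
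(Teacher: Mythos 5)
This statement is imported verbatim from \cite{teel1998connections}; the paper offers no proof of it, so there is nothing in-text to compare against, and your attempt has to be judged against the source. Your argument is essentially correct, but it is a genuinely different route from Teel's: he derives the Razumikhin-type ISS theorem as a corollary of the nonlinear small-gain theorem, viewing the Razumikhin premise as saying that the instantaneous dynamics are ISS with respect to the delayed values of $V$ with gain $\rho_V$ and with respect to $d$ with gain $\rho_d$, and then closing the loop through the (unit-gain) delay operator, where $\rho_V(s)<s$ is exactly the small-gain condition. You instead give the direct, self-contained argument: the sliding-window maximum $g$ as a non-increasing barrier above the level $\rho_d(\|d\|_\infty)$ (uniform boundedness and forward completeness), an iterated decay sublemma driven by $\rho_V(\ell)<\ell$, and the standard repackaging into a $\mathcal{KL}$ plus gain estimate. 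The small-gain route is more modular and extends to interconnections; your route is more elementary and makes transparent where $\rho_V(s)<s$ enters (it is needed precisely when the window maximum is attained at $\theta=0$, so that the Razumikhin premise is self-activating above the gain level). Two points deserve a little more care than you give them, though neither is fatal: (i) the running-maximum computation requires $t\mapsto V(x(t))$ to be absolutely continuous so that the a.e.\ bound $D^{+}g\le 0$ on $\{g>a\}$ actually yields monotonicity (you flag this, and it holds here since $V$ is locally Lipschitz and solutions are absolutely continuous on compacta where they are bounded --- which Step 1 itself supplies); (ii) in Step 3 the times $T_k$ depend on $a=\rho_d(\|d\|_\infty)$ as well as on $\|x_0\|$, so the ``standard construction'' of $\beta\in\mathcal{KL}$ must be run for each fixed target level $\varepsilon\ge\sigma_1^{-1}(a)$ rather than along the full sequence $\ell_k\downarrow a$; this is the usual uniform-global-stability-plus-asymptotic-gain characterization of ISS and works, but as written the uniformity in $d$ is asserted rather than shown.
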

Under additional restrictions, existence of an ISS-LRF is also necessary
for stability \cite{efimov2024equivalence}.

\section{Design of {\color{black} delayed} HOSMC}

\label{sec:IV}

The idea of the method of \cite{Labbadi2024} consists in replacement
of the value of the ILF $V_{y}(y(t))$ derived through \eqref{eq:Q}
by its functional extension $\Psi(y_{t})$ given in \eqref{eq:Psi}
for a power $\chi>1$ and delay $\eta>0$. Then in the noise-free
case, applying Theorem \ref{theo:ILF} the following result can be
obtained (it is shown in the proof that $V_{y}(y(t))$ is an implicit
Lyapunov-Razumikhin function for the system):
\begin{thm}
\label{thm:He_Ctrl} \cite{Labbadi2024} Let the LMIs \eqref{eq:LMI_imp_Rn}
be feasible for some $X=X^{\top}\in\mathbb{R}^{n\times n}$, $Y\in\mathbb{R}^{1\times n}$
and $\varrho_{1}>\varrho_{2}\Delta^{2}>0$. If the control $u$ is
of the form 
\begin{equation}
u(y_{t})=YX^{-1}D_{\mathbf{r}}\left(\Psi^{-1}(y_{t})\right)y(t),\label{eq:u}
\end{equation}
where $\Psi(y_{t})$ is defined in \eqref{eq:Psi} with $\chi>1$
and $\eta>0$, $V_{y}\left(y(t)\right)$ is such that \\ $Q(V_{y}\left(y(t)\right),y(t))=0$
for $Q$ defined in \eqref{eq:Q}, then the system \eqref{eq:LS}
with $w\equiv0${\color{black}, $\delta \equiv 0$} and the control \eqref{eq:u} is GUHeS at the origin.
\end{thm}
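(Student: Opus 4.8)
The plan is to apply Theorem~\ref{theo:ILF} with $Q_1(V,x)=Q(V,x)$ and $Q_2(V,x)=Q(V^{1/n},x)$, $Q$ as in~\eqref{eq:Q}; the associated implicit Lyapunov--Razumikhin functions are then $V_1=V_y$ and $V_2=V_y^{n}$. Conditions $\mathcal{R}_{i1}$, $\mathcal{R}_{i2}$, $\mathcal{R}_{i5}$, $\mathcal{R}_{i6}$ are routine: $Q$ is smooth on $\R_{+}^{\star}\times(\R^{n}\setminus\{0\})$; for fixed $x\ne0$, $V\mapsto Q_{i}(V,x)$ decreases strictly from $+\infty$ (as $V\to0^{+}$) to $-1$ (as $V\to+\infty$), giving unique solvability; $\partial Q_{i}/\partial V<0$ reduces to positivity of $\Gr X^{-1}+X^{-1}\Gr$, which follows from $X\succ0$ and the third inequality of~\eqref{eq:LMI_imp_Rn} (a Schur complement yields in fact $\Gr X^{-1}+X^{-1}\Gr\succeq I_{n}$); and $Q_{1}(1,x)=Q(1,x)=Q_{2}(1,x)$. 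For $\mathcal{R}_{i3}$--$\mathcal{R}_{i4}$ one substitutes $w=\Dr(V^{-1})x$, so $Q(V,x)=w^{\top}X^{-1}w-1$, and squeezes $w^{\top}X^{-1}w$ between $\lambda_{\min}(X^{-1})$, $\lambda_{\max}(X^{-1})$ times $|\Dr(V^{-1})x|^{2}$, which in turn lies between the monomials $V^{-2n}|x|^{2}$ and $V^{-2}|x|^{2}$; this gives explicit $\mathcal{IK}_{\infty}$ bounds, and for $Q_2$ the monomials become $V^{-2/n}|x|^{2}$ and $V^{-2}|x|^{2}$, so that $\hbar_{1,2}(b|x|,|x|)=\lambda_{\min}(X^{-1})b^{-2}-1\ge0$ for $b=\sqrt{\lambda_{\min}(X^{-1})}$, $b|x|>1$ --- the rescaling of $Q_2$ is precisely what repairs the large-$|x|$ behaviour, since $V_y$ itself grows only like $|x|^{1/n}$ in the worst coordinate direction and would violate the second part of $\mathcal{R}_{i4}$ for $n\ge2$; the first part follows similarly from $\hbar_{1,1}$.

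The substantive condition is $\mathcal{R}_{i7}$. Along a solution with $x_{t}=\varphi$, implicit differentiation of $Q_{i}(V_{i},x)\equiv0$ gives $\tfrac{\partial Q_{i}}{\partial V}\dot V_{i}=-\tfrac{\partial Q_{i}}{\partial x}f$, so (a), (b) amount to $\dot V_{1}\le\rho\ln(V_{1}/e)V_{1}$ on $\Omega_{1}$ and $\dot V_{2}\le-\rho$ on $\Omega_{2}$. The key is that membership in $\Omega_{1}$ or $\Omega_{2}$ pins down $\Psi(\varphi)$ through~\eqref{eq:Psi}. Write $V=V_{y}(\varphi(0))$, $\bar V=\max_{-\eta\le\theta\le0}V_{y}(\varphi(\theta))\ge V$. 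On $\Omega_{2}$, $V>1$ and the Razumikhin constraint gives $\bar V\le Ve^{(\chi-1)/n}$; then $\bar V\ge V>1$ forces $\min[\bar V,\bar V^{\chi}]=\bar V$ and $e^{1-\chi}\bar V\le V$, so $\Psi(\varphi)=V$ exactly and~\eqref{eq:u} coincides with~\eqref{eq:HOSM_ctrl}; Theorem~\ref{thm:HOSM_ctrl} gives $\dot V_{y}\le-(\varrho_{1}-\varrho_{2}\Delta^{2})$, hence $\dot V_{2}=nV_{y}^{n-1}\dot V_{y}\le-n(\varrho_{1}-\varrho_{2}\Delta^{2})$, which is $\mathcal{R}_{i7}$(b) with $\rho=\varrho_{1}-\varrho_{2}\Delta^{2}$. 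On $\Omega_{1}$ one always has $V\le1$ and $\bar V\le(Ve^{\chi-1})^{1/\chi}$; whenever this yields $e^{1-\chi}\min[\bar V,\bar V^{\chi}]\le V$ (in particular whenever $\bar V\le1$), again $\Psi(\varphi)=V$ and $\dot V_{y}\le-(\varrho_{1}-\varrho_{2}\Delta^{2})=-\rho$, and since $s(1-\ln s)\le1$ on $(0,1]$ we get $-\rho\le-\rho V(1-\ln V)=\rho\ln(V/e)V$, i.e.\ $\mathcal{R}_{i7}$(a).

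The only remaining and, I expect, by far the hardest case is $\varphi\in\Omega_{1}$ with $\Psi(\varphi)>V$: then $\Psi(\varphi)>V$ forces $\bar V>1$, the $\Omega_{1}$-constraint gives $\bar V\le(Ve^{\chi-1})^{1/\chi}$, hence $V<e^{1-\chi}<1$, and~\eqref{eq:u} becomes the gain-reduced feedback $u(\varphi)=YX^{-1}\Dr(\mu)w$ with $w=\Dr(V^{-1})\varphi(0)$ (so $w^{\top}X^{-1}w=1$) and $\mu=V/\Psi(\varphi)\in(0,1)$, bounded below by $(V/e)^{(\chi-1)/\chi}$. Using the homogeneity identities $\Dr(V^{-1})A\varphi(0)=V^{-1}Aw$ and $\Dr(V^{-1})b=V^{-1}b$ one writes $\tfrac{\partial Q}{\partial x}f=\tfrac{2}{V}w^{\top}X^{-1}[Aw+b(u(\varphi)+d)]$, splits $u(\varphi)=YX^{-1}w-YX^{-1}(I_{n}-\Dr(\mu))w$, bounds $w^{\top}X^{-1}(A+bYX^{-1})w\le-\tfrac{1}{2\varrho_{2}}(b^{\top}X^{-1}w)^{2}-\tfrac{\varrho_{1}}{2}w^{\top}(\Gr X^{-1}+X^{-1}\Gr)w$ by the second inequality of~\eqref{eq:LMI_imp_Rn}, absorbs the matched disturbance ($|d|\le\Delta$) and the gain-mismatch term --- of size governed by $\|I_{n}-\Dr(\mu)\|=1-\mu^{n}$ --- by Young's inequality, and uses $\tfrac{\partial Q}{\partial V}=-\tfrac{1}{V}w^{\top}(\Gr X^{-1}+X^{-1}\Gr)w$ so that the factors $1/V$ cancel. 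The delicate point --- the main obstacle of the whole argument --- is that one must not merely prove $\dot V_{y}<0$, but has to turn the Razumikhin-enforced lower bound on $\mu$ into exactly the logarithmic rate $\dot V_{y}\le\rho\ln(V_{y}/e)V_{y}$ demanded by $\mathcal{R}_{i7}$(a); this is where the loss of homogeneity caused by $\chi>1$ and the reduced control authority near the origin genuinely interact, and where essentially all the analytic work lies. With $\mathcal{R}_{i1}$--$\mathcal{R}_{i7}$ verified, Theorem~\ref{theo:ILF} delivers GUHeS of~\eqref{eq:LS},~\eqref{eq:u} with $w\equiv0$, $\delta\equiv0$.
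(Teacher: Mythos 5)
Your overall strategy is the one the paper relies on (the paper itself only cites \cite{Labbadi2024} and states that $V_y$ is shown to be an implicit Lyapunov--Razumikhin function via Theorem~\ref{theo:ILF}), and the parts you actually carry out are correct. But as written the proposal has a genuine hole: you explicitly defer ``essentially all the analytic work'' to the case $\varphi\in\Omega_1$ with $\Psi(\varphi)>V_y(\varphi(0))$, and you never prove the logarithmic decay there. The good news is that this case is \emph{empty}, so the hole closes with two lines rather than with the hard homogeneity-loss estimate you anticipate. Write $V=V_y(\varphi(0))$, $\bar V=\max_{-\eta\le\theta\le0}V_y(\varphi(\theta))$. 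Membership in $\Omega_1$ gives $\bar V\le (Ve^{\chi-1})^{1/\chi}$. If $\bar V\le1$ then $\min[\bar V,\bar V^{\chi}]=\bar V^{\chi}\le Ve^{\chi-1}$, so $\Psi=V$, as you noted. If $\bar V>1$ then $\min[\bar V,\bar V^{\chi}]=\bar V$ and, setting $a=Ve^{\chi-1}$, the two requirements
\begin{equation*}
1<\bar V\le a^{1/\chi}\quad(\text{hence }a>1)\qquad\text{and}\qquad e^{1-\chi}\bar V>V\ \Leftrightarrow\ \bar V>a\ (\text{hence, with }\bar V\le a^{1/\chi},\ a<1)
\end{equation*}
are incompatible. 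So on all of $\Omega_1$ one has $\Psi(\varphi)=V$, the control \eqref{eq:u} coincides with \eqref{eq:HOSM_ctrl}, Theorem~\ref{thm:HOSM_ctrl} gives $\dot V\le-(\varrho_1-\varrho_2\Delta^2)$, and your observation $s(1-\ln s)\le1$ on $(0,1]$ finishes $\mathcal{R}_{i7}$(a). (This is exactly how the construction is meant to work: the Razumikhin sets $\Omega_1,\Omega_2$ are tailored to the functional \eqref{eq:Psi} so that on them $\Psi$ collapses to the pointwise ILF value; compare the step ``\eqref{eq:V_y} ensures $\Psi(y_t)=V_y(y(t))$'' in the proof of Theorem~\ref{theo:u}.)

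Two smaller corrections. First, in your description of the ``remaining case'' you derive $V<e^{1-\chi}$ from $\Psi>V$, but you overlooked that $\bar V>1$ together with $\bar V\le(Ve^{\chi-1})^{1/\chi}$ forces $V>e^{1-\chi}$; had you put these side by side you would have seen the contradiction and avoided the detour through the gain-reduced feedback $\mu=V/\Psi$. Second, your rescaling $Q_2(V,x)=Q(V^{1/n},x)$ to repair the large-$|x|$ linear growth in $\mathcal{R}_{i4}$ is a legitimate device (and $\mathcal{R}_{i6}$, $\mathcal{R}_{e3}$ survive since $V_2=V_1^{n}$), but then be consistent in $\mathcal{R}_{i7}$(b): the constraint $\max_{\theta}Q_2(se^{\chi-1},\phi(\theta))\le0$ translates into $\bar V\le Ve^{(\chi-1)/n}$, which you do use correctly to get $\Psi=V$ on $\Omega_2$.
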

Therefore, to apply \eqref{eq:u}, exactly the same conditions are
necessary to check as for \eqref{eq:HOSM_ctrl} having two auxiliary
free tuning parameters $\chi$ and $\eta$. Additional computational
power and memory are also required to calculate $\Psi$. As a result,
the matched disturbance can be compensated by a (theoretically) \emph{continuous}
control for all $t\in[0,+\infty)$ (the right end of the interval
can be included to highlight that it is necessary to wait the delay
time $\eta$ after the system is settled at the origin (for $t\to+\infty$)
and when we will get the discontinuity in the control \eqref{eq:u}).
The price for these advantages is a slightly slower convergence, since
\eqref{eq:HOSM_ctrl} guarantees for \eqref{eq:LS} with $w\equiv0$ {\color{black} and $\delta\equiv0$}
the GUFTS property (but still the decay with \eqref{eq:u} is faster than for any linear
control algorithm).

Note that in the presence of the noise $w\ne0$, the value $V_{y}\left(y(t)\right)$
is used in the control expression \eqref{eq:u}, while for the stability
analysis we need to consider the function defined by the equation $Q(V(x(t)),x(t))=0$ as an ISS-LRF candidate, whose connection with
$V_{y}(y(t))$ as a function of $w$ is not easy to evaluate.

\section{Main result}
\label{sec:V}

The main result of this note that establishes the conditions of robustness of \eqref{eq:u} in the noise $w$ {\cb and mismatched perturbation $\delta$} is given below:

\begin{thm}
\label{theo:u} Under conditions of Theorem \ref{thm:He_Ctrl}, the
system \eqref{eq:LS}, \eqref{eq:u} is ISS with respect to the input {\cb
$w,\delta\in\L_{\infty}^{n}$, $\delta^\top b\equiv0$} uniformly in $d\in\D$ provided that $\chi>1$
is chosen sufficiently close to $1$.
\end{thm}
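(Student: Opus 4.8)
The plan is to construct an ISS Lyapunov--Razumikhin function for the closed-loop system \eqref{eq:LS}, \eqref{eq:u} and to invoke Theorem~\ref{thm:Teel}. The natural candidate is the implicit function $V(x)$ defined by $Q(V(x),x)=0$ with $Q$ from \eqref{eq:Q}, i.e.\ the ILF evaluated along the true (noise-free) state $x(t)$ rather than along the measured $y(t)=x(t)+w(t)$. The first step is to record the standard regularity and sandwich properties of $V$: by the implicit function theorem applied via $\mathcal{R}_{i5}$ (which holds here because $\partial Q/\partial V<0$, as in \cite{polyakov2015finite,Polyakov2016}), $V$ is locally Lipschitz on $\mathbb{R}^n\setminus\{0\}$ and continuous at $0$, and there are $\sigma_1,\sigma_2\in\mathcal{K}_\infty$ with $\sigma_1(|x|)\le V(x)\le\sigma_2(|x|)$; one also has the homogeneity-type identity $\frac{d}{dt}V(x(t))=-\bigl(\frac{\partial Q}{\partial x}\dot x\bigr)\big/\bigl(\frac{\partial Q}{\partial V}\bigr)$ along trajectories.

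The second and central step is to estimate $D^{+}V(x(t))$ along \eqref{eq:LS}, \eqref{eq:u} under a Razumikhin-type precondition. Write the control as $u=YX^{-1}D_{\mathbf r}(\Psi^{-1}(y_t))\,y(t)$ and split $y(t)=x(t)+w(t)$; when $V(x(t))$ is the "currently largest" value in the sense that $\max_{\theta\in[-\eta,0]}V(x(t+\theta))$ is comparable to $V(x(t))$ (the Razumikhin condition, with the mild power $\chi>1$ this amounts to $\max_\theta V(x(t+\theta))\le V(x(t))^{\chi}e^{\chi-1}$ up to the usual ILF reparametrization), one shows that $\Psi(y_t)$ is within a multiplicative factor, close to $1$ when $\chi\to1$ and $\|w\|_\infty$ is small relative to $V(x(t))$, of $V_y(y(t))$, which in turn is a small perturbation of $V(x(t))$. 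Feeding this into the LMI inequality of Theorem~\ref{thm:HOSM_ctrl} — which gives the strict decrease $\dot V_y\le-(\varrho_1-\varrho_2\Delta^2)$ for the exact homogeneous controller — and carrying the mismatch between $V(x)$, $V_y(y)$, $\Psi(y_t)$ as an additive error term that is $\mathcal{K}$ in $\|w\|_\infty$ (using the $D_{\mathbf r}(\lambda)$ scaling to bound $D_{\mathbf r}(\Psi^{-1})-D_{\mathbf r}(V^{-1})$), plus the extra $\delta$-term $\frac{\partial Q}{\partial x}\delta$ which is bounded by a $\mathcal{K}$ function of $\|\delta\|_\infty$ because $\delta^\top b\equiv0$ keeps it "unmatched" and the LMI margin $\varrho_1-\varrho_2\Delta^2>0$ absorbs it, one obtains: there exist $\rho_V,\rho_d\in\mathcal{K}$ with $\rho_V(s)<s$ and $\sigma\in\mathcal{K}$ such that $V(\phi(0))\ge\max\{\rho_V[\max_{\theta}V(\phi(\theta))],\rho_d(\|(w,\delta)\|_\infty)\}$ implies $D^{+}V(\phi(0))f\le-\sigma(|\phi(0)|)$. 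This is exactly condition \eqref{eq:ISS_RL}, so Theorem~\ref{thm:Teel} yields ISS.

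The verification that $\rho_V(s)<s$ for all $s>0$ is where the hypothesis "$\chi>1$ sufficiently close to $1$" is used: the Razumikhin gain $\rho_V$ inherited from the ILF construction behaves like $s^{1/\chi}e^{1-1/\chi}$ (cf.\ the sets $\Omega_1,\Omega_2$ in Theorem~\ref{theo:ILF}), which is $<s$ for $s\ne1$ for every $\chi>1$, but the additional error incurred by replacing $V_y$ by $\Psi$ and by the noise splitting contributes a factor tending to $1$ as $\chi\to1$; choosing $\chi$ close enough to $1$ keeps the composed gain below the identity on the relevant range. I expect the main obstacle to be precisely this bookkeeping: controlling the discrepancy between the three functions $V(x(t))$, $V_y(y(t))$ and $\Psi(y_t)$ simultaneously — the first enters the Lyapunov analysis, the second the control law, and the third the delayed control law — and showing all cross-terms can be made either negligible (as $\chi\to1$) or dominated by the strict LMI margin $\varrho_1-\varrho_2\Delta^2$, uniformly in $d\in\mathcal D$. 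The delay only enters through the Razumikhin condition, so once the one-shot pointwise estimate is in place the time-delay aspect is handled entirely by Theorem~\ref{thm:Teel}.
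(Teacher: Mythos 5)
Your proposal follows essentially the same route as the paper: the same implicit ISS Lyapunov--Razumikhin function $V$ defined by $Q(V(x),x)=0$, the same decomposition of $\dot V$ into the nominal LMI-controlled part plus error terms quantifying the mismatch between $V(x)$, $V_y(y)$ and $\Psi(y_t)$ together with the $w$- and $\delta$-contributions, and the same appeal to Theorem~\ref{thm:Teel} with the requirement $\rho_V(s)<s$ supplying the condition on $\chi$. One minor slip in a side remark: the gain $s\mapsto s^{1/\chi}e^{1-1/\chi}$ is below the identity only for $s>e$, not for all $s\neq 1$; the paper instead arrives at $\rho_V(s)=e^{1-\chi}\xi^{\chi+1}\min\{s,s^{\chi}\}$ and needs $e^{1-\chi}\xi^{\chi+1}<1$, which is exactly where the interplay between $\chi$ and the absorption constant $\xi$ (hence $\gamma_2$) enters.
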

\begin{proof}
Following Theorem \ref{thm:HOSM_ctrl}, the function $Q$ in \eqref{eq:Q}
verifies all conditions to define an ILF. Indeed, the function $Q$
is continuously differentiable for all $(V,x)\in\mathbb{R}_{+}^{\star}\times\mathbb{R}^{n}$,
and for any $x\in\R^{n}\setminus\{0\}$ there exists a solution $V(x)$
such that $Q(V(x),x)=0$ \cite{polyakov2015finite} (as usual we complement
this statement with $V(0)=0$). Moreover, the following chain of inequalities
can be established: 
\begin{gather*}
\sqrt{\lambda_{\min}(X)}\min\{V,V^{n}\}\leq|x|\leq\sqrt{\lambda_{\max}(X)}\max\{V,V^{n}\}
\end{gather*}
which are satisfied for all $(V,x)\in\mathbb{R}_{+}^{\star}\times\mathbb{R}^{n}$
with $Q(V,x)=0$, and $X$ is a positive definite matrix due to LMIs
\eqref{eq:LMI_imp_Rn}. We can express the partial derivatives as
follows: 
\begin{gather*}
\frac{\partial Q(V,x)}{\partial V}\\
=-2x^{\top}\left[\diag\{(n-i+1)V^{-n+i-2}\}_{i=1}^{n}\right]X^{-1}D_{\mathbf{r}}(V^{-1})x\\
=-V^{-1}x^{\top}D_{\mathbf{r}}(V^{-1})(X^{-1}G_{\mathbf{r}}+G_{\mathbf{r}}X^{-1})D_{\mathbf{r}}(V^{-1})x\\
<0
\end{gather*}
due to the imposed LMI $XG_{\mathbf{r}}+G_{\mathbf{r}}X\succ0$, and
\[
\frac{\partial Q(V,x)}{\partial x}=2x^{\top}D_{\mathbf{r}}(V^{-1})X^{-1}D_{\mathbf{r}}(V^{-1}).
\]
Recall the useful properties $D_{\mathbf{r}}(s)A=sAD_{\mathbf{r}}(s)$
and $D_{\mathbf{r}}(s)b=sb$ that are satisfied by construction.

Since all conditions of Theorem \ref{thm:He_Ctrl} are verified, for
$w\equiv0$ {\color{black}and $\delta\equiv0$} the closed-loop system \eqref{eq:LS}, \eqref{eq:u} is
GUHeS at the origin. To check the ISS property uniformly in $d\in\D$,
following Theorem \ref{thm:Teel}, we will use an ILF defined by the
equation $Q(V(x),x)=0$, and we need to examine the implication \eqref{eq:ISS_RL}
with the constraint $\rho_{V}(s)<s$ for all $s>0$, which has to
be satisfied for all $x_{t}\in\mathcal{C}_{\eta}$, {\cb $w,\delta\in\L_{\infty}^{n}$}
and $d\in\D$ {\cb using the restriction $\delta^\top b\equiv0$ (it is a necessary condition to have ISS property with respect to $\delta$ since the control is bounded)}. Define $V(t)=V(x(t))$ and note that
\[
\dot{V}(t)=-\left(\frac{\partial Q(V(t),x(t))}{\partial V}\right)^{-1}\frac{\partial Q(V(t),x(t))}{\partial x}\dot{x}(t),
\]
where the expressions of the partial derivatives are presented above,
$\dot{x}(t)$ comes from \eqref{eq:LS} after substitution of \eqref{eq:u}.
Consequently, skipping time dependence for brevity, and denoting $P=X^{-1}$,
$\mathbf{K}=YX^{-1}$, we obtain

\begin{gather*}
\frac{\partial Q(V,x)}{\partial x}\dot{x}=2x^{\top}D_{\mathbf{r}}(V^{-1})PD_{\mathbf{r}}(V^{-1})\left[Ax+b(u+d){\cb+\delta}\right]\\
=V^{-1}\bigg[x^{\top}D_{\mathbf{r}}(V^{-1})\left(A^{\top}P+PA+Pb\mathbf{K}\right.\\
\left.+\mathbf{K}^{\top}b^{\top}P\right)D_{\mathbf{r}}(V^{-1})x\\
+2x^{\top}D_{\mathbf{r}}(V^{-1})Pb\mathbf{K}\left(D_{\mathbf{r}}(\Psi^{-1})-D_{\mathbf{r}}(V^{-1})\right)x\\
+2x^{\top}D_{\mathbf{r}}(V^{-1})Pb\mathbf{K}D_{\mathbf{r}}(\Psi^{-1})w\\
+2x^{\top}D_{\mathbf{r}}(V^{-1})P(bd{\cb+VD_{\mathbf{r}}(V^{-1})\delta})\bigg].
\end{gather*}
After adding and subtracting the missing terms, we get for any $\gamma_{1},\gamma_{2},\gamma_{3}>0$:
\begin{gather*}
V\frac{\partial Q(V,x)}{\partial x}\dot{x}=\zeta^{\top}\mathcal{Q}\zeta\\
-\varrho_{1}x^{\top}D_{\mathbf{r}}(V^{-1})(PG_{\mathbf{r}}+G_{\mathbf{r}}P)D_{\mathbf{r}}(V^{-1})x+\varrho_{2}d^{2}\\
+\gamma_{1}w^{\top}D_{\mathbf{r}}(\Psi^{-1})^{2}w\\
+\gamma_{2}x^{\top}\left(D_{\mathbf{r}}(\Psi^{-1})-D_{\mathbf{r}}(V^{-1})\right)^{2}x{\cb+\gamma_{3}V^{2}\delta^\top D_{\mathbf{r}}^{2}(V^{-1})\delta},
\end{gather*}
where $\zeta=[\zeta_{1}\quad\zeta_{2}\quad\zeta_{3}\quad\zeta_{4}\quad\zeta_{5}]^\top$
with $\zeta_{1}=x^{\top}D_{\mathbf{r}}(V^{-1})$, $\zeta_{2}=d$,
$\zeta_{3}=w^{\top}D_{\mathbf{r}}(\Psi^{-1})$, $\zeta_{4}=x^{\top}\left(D_{\mathbf{r}}(\Psi^{-1})-D_{\mathbf{r}}(V^{-1})\right)${\cb, $\zeta_{5} = VD_{\mathbf{r}}(V^{-1})\delta$}
and 
\begin{gather*}
\mathcal{Q}=\begin{pmatrix}\Pi & Pb & Pb\mathbf{K} & Pb\mathbf{K} &{\cb  P}\\
b^{\top}P & -\varrho_{2} & 0 & 0 & {\cb 0}\\
(b\mathbf{K})^{\top}P & 0 & -\gamma_{1}I_{n} & 0 & {\cb 0}\\
(b\mathbf{K})^{\top}P & 0 & 0 & -\gamma_{2}I_{n} &{\cb 0}\\
{\cb P} & {\cb 0} & {\cb 0} & {\cb 0} & {\cb -\gamma_{3} I_n}
\end{pmatrix},\\
\Pi=A^{\top}P+PA+Pb\mathbf{K}+\mathbf{K}^{\top}b^{\top}P\\
+\varrho_{1}(PG_{\mathbf{r}}+G_{\mathbf{r}}P).
\end{gather*}
The LMIs \eqref{eq:LMI_imp_Rn} imply that there exist $\gamma_{1},\gamma_{2},\gamma_{3}>0$
such that $\mathcal{Q}\preceq0$. Indeed, multiplying the matrix $\mathcal{Q}$
by $\diag[X,1,X,X,X]$ from the left and right sides, we get the negative
semi-definite block $\left[\begin{array}{cc}
\Upsilon & b\\
b^{\top} & -\varrho_{2}
\end{array}\right]$ on the main diagonal, and the same property for the new matrix follows
for sufficiently big values of $\gamma_{1},\gamma_{2}$ and $\gamma_{3}$. Therefore,
one has 
\begin{gather*}
\frac{\partial Q(V,x)}{\partial x}\dot{x}\leq\frac{1}{V}\bigg[-\varrho_{1}x^{\top}D_{\mathbf{r}}(V^{-1})(PG_{\mathbf{r}}+G_{\mathbf{r}}P)D_{\mathbf{r}}(V^{-1})x\\
+\varrho_{2}\Delta^{2}+\gamma_{1}w^{\top}D_{\mathbf{r}}(\Psi^{-1})^{2}w\\
+\gamma_{2}x^{\top}\left(D_{\mathbf{r}}(\Psi^{-1})-D_{\mathbf{r}}(V^{-1})\right)^{2}x{\cb+\gamma_{3}V^{2}\delta^\top D_{\mathbf{r}}^{2}(V^{-1})\delta}\bigg]
\end{gather*}
and assume that the inequalities
\begin{gather}
w^{\top}D_{\mathbf{r}}\left(\Psi^{-1}\right)^{2}w\leq\frac{\varrho_{1}-\varrho_{2}\Delta^{2}}{6\gamma_{1}}x^{\top}D_{\mathbf{r}}\left(V^{-1}\right)(PG_{\mathbf{r}}\nonumber \\
+G_{\mathbf{r}}P)D_{\mathbf{r}}\left(V^{-1}\right)x,\label{eq:cond_A}\\
\left(D_{\mathbf{r}}\left(\Psi^{-1}\right)-D_{\mathbf{r}}\left(V^{-1}\right)\right)^{2}\leq\frac{\varrho_{1}-\varrho_{2}\Delta^{2}}{6\gamma_{2}}D_{\mathbf{r}}\left(V^{-1}\right)(PG_{\mathbf{r}}\nonumber \\
+G_{\mathbf{r}}P)D_{\mathbf{r}}\left(V^{-1}\right),\label{eq:cond_B}\\
{\color{black} V^{2}\delta^\top D_{\mathbf{r}}^{2}(V^{-1})\delta\leq\frac{\varrho_{1}-\varrho_{2}\Delta^{2}}{6\gamma_{3}}x^{\top}D_{\mathbf{r}}\left(V^{-1}\right)(PG_{\mathbf{r}}}\nonumber\\{\cb+G_{\mathbf{r}}P)D_{\mathbf{r}}\left(V^{-1}\right)x}\label{eq:cond_C}
\end{gather}
are verified (recall that $\varrho_{1}-\varrho_{2}\Delta^{2}>0$),
then
\begin{gather*}
\frac{\partial Q(V,x)}{\partial x}\dot{x}\leq-\frac{1}{V}\frac{\varrho_{1}-\varrho_{2}\Delta^{2}}{2}x^{\top}D_{\mathbf{r}}(V^{-1})(PG_{\mathbf{r}}\\
+G_{\mathbf{r}}P)D_{\mathbf{r}}(V^{-1})x,
\end{gather*}
which gives
\[
\dot{V}(t)\leq-\frac{\varrho_{1}-\varrho_{2}\Delta^{2}}{2},
\]
and it is left to show that the Lyapunov-Razumikhin conditions \eqref{eq:ISS_RL}
can provide \eqref{eq:cond_A}--\eqref{eq:cond_C} under a proper
choice of $\rho_{V}$ and $\rho_{d}$ {\cb for $w$ and $\delta$. Note that the inequalities \eqref{eq:cond_A} and \eqref{eq:cond_B}
depend on the noise $w$, the former explicitly, and the latter implicitly
since $\Psi$ is a function of $V_{y}$, while \eqref{eq:cond_C} under the constraint $\delta^\top b=0$ follows the inequality:
$$ \delta^\top \delta\leq\frac{\varrho_{1}-\varrho_{2}\Delta^{2}}{6\gamma_{3}\lambda_{\max}(P)}\min\{V^{2},V^{2(n-1)}\}$$
or, equivalently,
$$ V \geq \rho_{\delta}(|\delta|),$$
where
$$\rho_{\delta}(s)=\max\left[\sqrt{\frac{6\gamma_{3}\lambda_{\max}(P)}{\varrho_{1}-\varrho_{2}\Delta^{2}}s^{2}},\sqrt[2(n-1)]{\frac{6\gamma_{3}\lambda_{\max}(P)}{\varrho_{1}-\varrho_{2}\Delta^{2}}s^{2}}\right].$$

To derive the gain $\rho_d$ for the noise $w$}, observe that
\[
w^{\top}D_{\mathbf{r}}\left(\Psi^{-1}\right)^{2}w\leq\max\left(\Psi^{-2},\Psi^{-2n}\right)|w|^{2}
\]
and for $Q(V,x)=0$ there exist $\alpha>0$ such that
\[
\alpha\leq\frac{\varrho_{1}-\varrho_{2}\Delta^{2}}{4\gamma_{1}}x^{\top}D_{\mathbf{r}}\left(V^{-1}\right)(PG_{\mathbf{r}}+G_{\mathbf{r}}P)D_{\mathbf{r}}\left(V^{-1}\right)x,
\]
then \eqref{eq:cond_A} is implied by the following upper bound on
the noise amplitude:
\[
|w|\leq\sqrt{\alpha}\min\left(\Psi,\Psi^{n}\right).
\]
Next, the property \eqref{eq:cond_B} can be equivalently written
as 
\[
\left(D_{\mathbf{r}}\left(\frac{V}{\Psi}\right)-I_{n}\right)^{2}\leq\frac{\varrho_{1}-\varrho_{2}\Delta^{2}}{4\gamma_{2}}\left(PG_{\mathbf{r}}+G_{\mathbf{r}}P\right),
\]
and 
\[
\left(D_{\mathbf{r}}\left(\frac{V}{\Psi}\right)-I_{n}\right)^{2}\leq\max\Bigg[\left(\frac{V}{\Psi}-1\right)^{2},\left(\frac{V}{\Psi}-1\right)^{2n}\Bigg]I_n.
\]
Hence, the bound (we use here the fact that $PG_{\mathbf{r}}+G_{\mathbf{r}}P\succeq I_n$ due to LMIs \eqref{eq:LMI_imp_Rn})
\begin{gather*}
\frac{V}{\Psi}\leq\xi,\;
\xi=1+\min\Biggl\{\sqrt{\frac{\varrho_{1}-\varrho_{2}\Delta^{2}}{4\gamma_{2}}},
\sqrt[2n]{\frac{\varrho_{1}-\varrho_{2}\Delta^{2}}{4\gamma_{2}}}\Biggr\}
\end{gather*}
implies the condition \eqref{eq:cond_B}.

Let us assume that 
\begin{gather}
V_{y}\left(y(t)\right)\ge e^{1-\chi}\min\left[\underset{-\eta\leq\theta\leq0}{\max}V_{y}\left(y(t+\theta)\right),\right.\label{eq:V_y}\\
\left.\left(\underset{-\eta\leq\theta\leq0}{\max}V_{y}\left(y(t+\theta)\right)\right)^{\chi}\right],\nonumber 
\end{gather}
which ensures $\Psi(y_{t})=V_{y}(y(t)):=V_{y}(t)$, then the inequalities
\eqref{eq:cond_A} and \eqref{eq:cond_B} follow by
\begin{gather*}
|w(t)|\leq\sqrt{\alpha}\min\left(V_{y}(t),V_{y}^{n}(t)\right),\;V(t)\leq\xi V_{y}(t).
\end{gather*}
Assume that the function $\rho_{d}$ is chosen in a way that the relation
$V(t)\geq\rho_{d}(|w(t)|)$ enforces the following inequalities: 
\begin{gather}
(x+w)^{\top}D_{\mathbf{r}}\left(\xi V^{-1}\right)PD_{\mathbf{r}}\left(\xi V^{-1}\right)(x+w)\geq1\nonumber \\
\geq(x+w)^{\top}D_{\mathbf{r}}\left[(\xi V)^{-1}\right]PD_{\mathbf{r}}\left[(\xi V)^{-1}\right](x+w),\label{eq:rho_d}\\
|w|\leq\sqrt{\alpha}\min\left[\frac{V}{\xi},\left(\frac{V}{\xi}\right)^{n}\right].\nonumber 
\end{gather}
Since $(x+w)^{\top}D_{\mathbf{r}}\left(V_{y}^{-1}\right)PD_{\mathbf{r}}\left(V_{y}^{-1}\right)(x+w)=1$
by definition, and it was shown above that $\frac{\partial Q(V,x)}{\partial V}<0$,
the first two constraints on $w$ guarantee that 
\begin{equation}
\frac{V(t)}{\xi}\leq V_{y}(t)\leq\xi V(t),\label{eq:V_vs_Vy}
\end{equation}
hence, \eqref{eq:cond_B} is ensured. Moreover, in such a case the
last restriction on $w$ provides \eqref{eq:cond_A}. Finally, taking
\[
\rho_{V}(s)=e^{1-\chi}\xi^{\chi+1}\min\left[s,s^{\chi}\right],
\]
verification of the Lyapunov-Razumikhin conditions \eqref{eq:ISS_RL}
reads as follows:
\begin{gather*}
V(t)\ge\max\{\rho_{V}(\underset{-\eta\leq\theta\leq0}{\max}V(t+\theta)),\rho_{d}(|w(t)|),\rho_{\delta}(|\delta(t)|)\}\Rightarrow\\
\eqref{eq:V_vs_Vy},\;\eqref{eq:V_y}+\eqref{eq:cond_A}-\eqref{eq:cond_C}\Rightarrow\\
\dot{V}(t)\leq-\frac{\varrho_{1}-\varrho_{2}\Delta^{2}}{2},
\end{gather*}
which by Theorem \ref{thm:Teel} leads to the desired conclusion provided
that $e^{1-\chi}\xi^{\chi+1}<1$. Note that $\gamma_{2}$ can be chosen
arbitrary large, then $\xi$ can approach $1$ as close as we want
and there exists a value of $\chi>1$ that this condition is verified.
\end{proof}
According to the formulation of Theorem \ref{theo:u}, any Lyapunov
function matrix $P$ and the control gain $\mathbf{K}$ obtained in
theorems \ref{thm:HOSM_ctrl} or \ref{thm:He_Ctrl} can be used for
robust uniform hyperexponential stabilization (the same LMIs are utilized).
The proof of Theorem \ref{theo:u} suggests an expression for the
asymptotic gain of the system with respect to the noise $w$ in an implicit
form \eqref{eq:rho_d}{\cb, and for the mismatched perturbation $\delta$ the asymptotic gain $\rho_{\delta}$ is given in the closed form}.




{\cb The obtained robustness results with respect to measurement noise allow us to commutate easily to an output dynamic feedback. Indeed, assume that instead of the full state measurement in \eqref{eq:LS}, only the first component $x_1(t)\in\mathbb{R}$ of the state $x(t)$ is available:
$y_{1}(t)=x_{1}(t)+w_{1}(t), $
where $w_1\in\mathcal{L}^{1}_{\infty}$ is the respective noise, then due to canonical form of the system, a finite-time (or even fixed-time) convergent observer can be designed following the results from \cite{hosm:Levant:2003,fnt:observer:Andrieu:2008,fnt:observers:Spurgeon:2008,perruquetti2008finite,Lopez-Ramirez_etal2018:Aut,Ping2025}, then the output $y(t)$ in \eqref{eq:LS} becomes available being generated by an observer with the related estimation error $w(t)$ dependent on the used method and $d$, $\delta$ and $w_1$. In the disturbance/perturbation-free setting the estimation error $w$ converges to zero in a finite time, and the hyperexponential convergence rates for the state $x(t)$ can be recovered for the control \eqref{eq:u}. Since many of these observers guarantee ISS of the estimation error with respect to uncertainties, by the standard results, serial connection of ISS systems is ISS \cite{sontag2007input}, and such a dynamic output controller is robust. Consequently, the result of Theorem \ref{theo:u} is preserved in this case also.

Let us illustrate the obtained theoretical findings in numeric experiments.}

 \section{Simulations}\label{sec:VI} 
Consider the system described by \eqref{eq:LS} with \( n = 3 \), subject to a matched disturbance \( d(t) = \sin(10t) \), satisfying \( |d(t)| \leq 1 \), and a mismatched perturbation \( \delta(t) = \left[ 0.03\sin(3t)\;\; 0.05\cos(5t)\;\; 0 \right]^\top \). Additionally, the system is affected by noise terms \( w_1(t) = w_2(t) = w_3(t) = 0.1\rnd(1) \), where $\rnd(1)$ generates a uniformly distributed in the interval $[0, 1]$ random number.  The initial state vector is set as:
\(
x(0) = [ 0.1\; 1 \; 3 ]^{\top}.
\)
The closed-loop system is numerically simulated using the explicit Euler method with a fixed step size \( h = 5\times10^{-3} \), incorporating a time delay \( \eta = 0.1 \) and a parameter \( \chi = 1.1 \). To evaluate the performance of the proposed controller, we compare the obtained results with  HOSMC of order three described in \cite{polyakov2015finite,Polyakov2016}.   The control inputs are computed using the algorithm presented in \cite{polyakov2015finite,Polyakov2016}, with a minimum ILF value \( V_{\min} = 0.1 \), using homogeneous proportional control (see\footnote{Toolbox for MATLAB: \url{https://researchers.lille.inria.fr/~polyakov/hcs/tutorial.html}}
 for implementation details). The controller parameters are determined by solving the corresponding LMI system for \( \gamma_1 = 0.2 \), \( \gamma_2 = 2 \), and \( \varrho_1 = 1 \). The computed gain matrices are
\(
K = \begin{bmatrix} -310.4000 & -91.7333 & -12.0000 \end{bmatrix}\) and 
\(P = \begin{bmatrix}
60.2035 & 14.0373 & 1.1637 \\ 
14.0373 & 3.4227 & 0.3023 \\ 
1.1637 & 0.3023 & 0.0302 
\end{bmatrix}.\)
\begin{figure}[!ht]
\centering 
\includegraphics[scale=0.6]{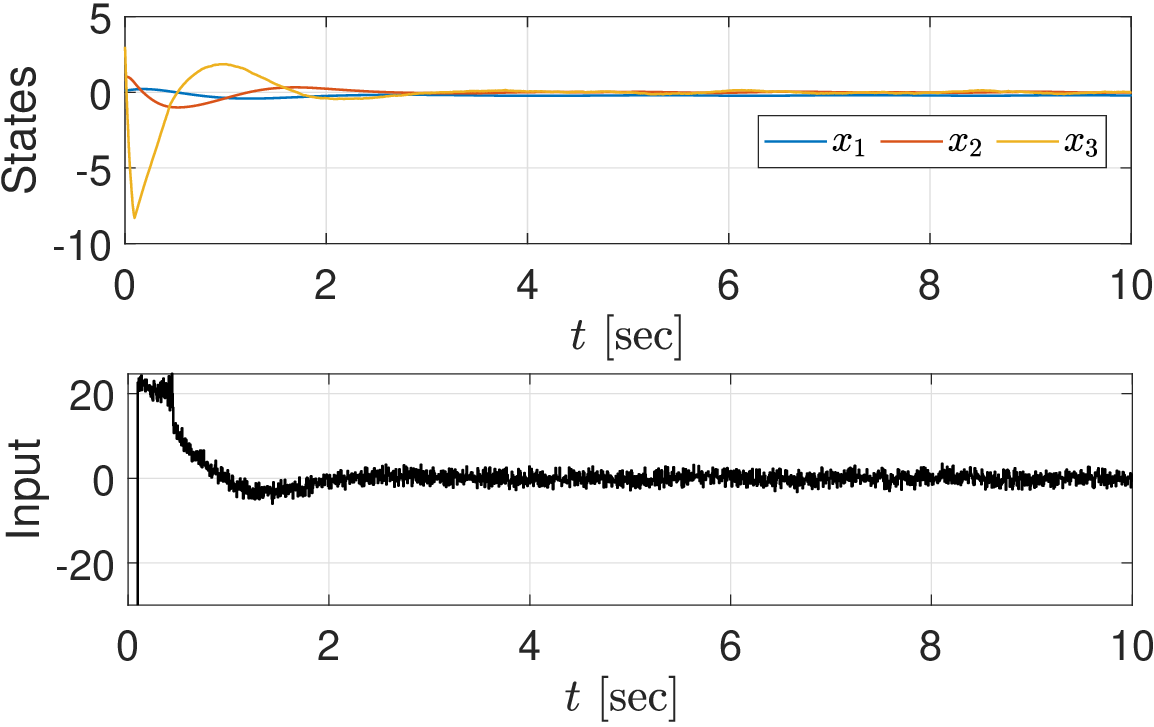} 
  \caption{Simulation results of the delayed HOSMC-ILF \eqref{eq:u}; Top: system states, Bottom: control input.}
\label{fig:Fig1} 
\end{figure}
\begin{figure}[!ht]
\centering 
\includegraphics[scale=0.6]{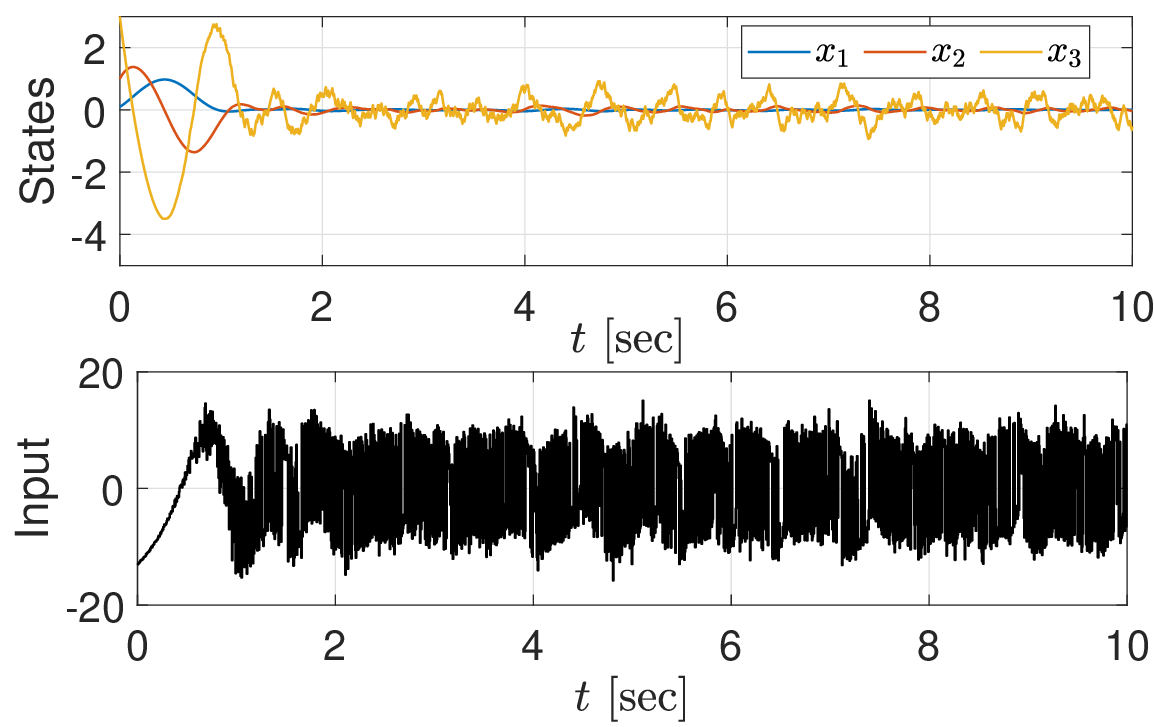} 
\caption{Simulation results of the  HOSMC-ILF (\ref{eq:HOSM_ctrl}) proposed in \cite{polyakov2015finite}; Top: system states, Bottom: control input.}
\label{fig:Fig2} 
\end{figure}
\begin{figure}[!ht]
\centering 
\includegraphics[scale=0.6]{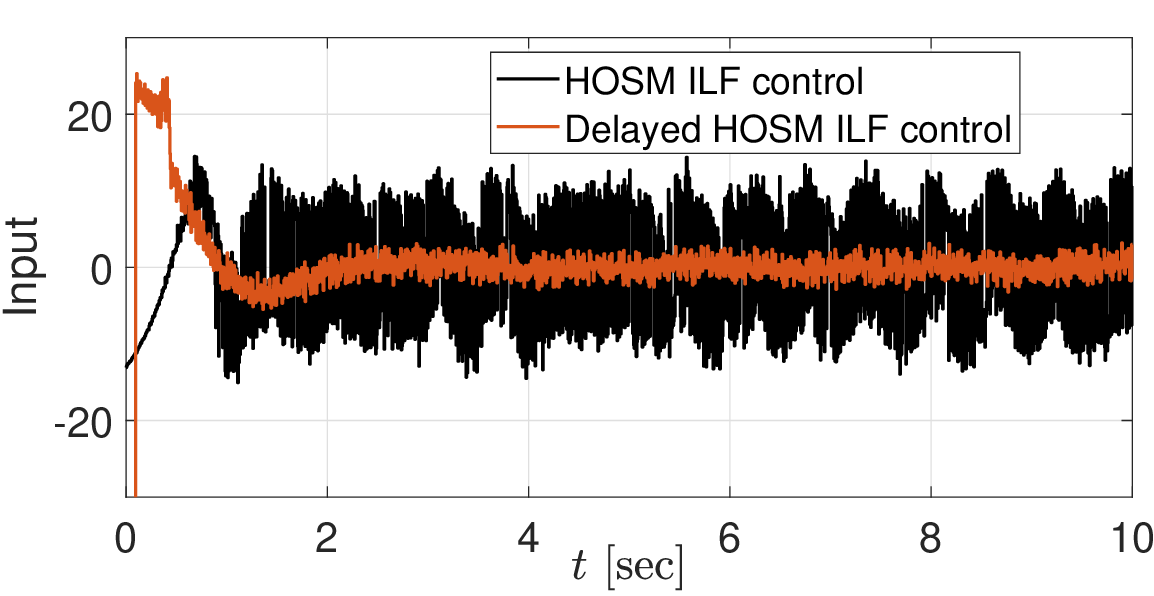} 
\caption{Comparison of control inputs.}
\label{fig:fig3} 
\end{figure}
The performance of the proposed delayed HOSM controller is illustrated in Fig. \ref{fig:Fig1}. It can be observed that  the control input maintains a reasonable amplitude.  
For comparison, the results of the HOSMC from \cite{polyakov2015finite} are depicted in Fig. \ref{fig:Fig2}. Although this approach ensures a faster tracking, it comes at the cost of chattering, as highlighted in Fig. \ref{fig:fig3}, which presents the control signals of both methods.  Note that this control was implemented using the sampled values of implicit Lyapunov function through the bisection algorithm given in  \cite{polyakov2015implicit, Labbadi2024}, and such a realization guarantees hyperexponential rate of convergence only.
Furthermore, the identification of the matched disturbance \( d(t) \) by \eqref{eq:u}, perturbed by the presence of $w$, can be observed  in Fig. \ref{fig:fig3}, demonstrating effective disturbance estimation after transients.

\section{Conclusions}

\label{sec:VIII} 
The robustness of the delayed HOSMC algorithm with respect to measurement noise {\cb and the mismatched properly structured disturbances} was established. Conventionally, an LMI-based tuning method for control parameters is used. Simulations show a significant reduction in chattering, affirming the effectiveness of the approach. 
Future work could explore extensions to observer design.

\bibliographystyle{siamplain}
\bibliography{biblios}

\begin{thebibliography}{10}

\bibitem{adamy2004soft}
{\sc J.~Adamy and A.~Flemming}, {\em Soft variable-structure controls: a
  survey}, Automatica, 40 (2004), pp.~1821--1844.

\bibitem{fnt:observer:Andrieu:2008}
{\sc V.~Andrieu, L.~Praly, and A.~Astolfi}, {\em Homogeneous approximation,
  recursive observer design, and output feedback}, SIAM Journal on Control and
  Optimization, 47 (2008), pp.~1814--1850.

\bibitem{bartolini2003survey}
{\sc G.~Bartolini, A.~Pisano, E.~Punta, and E.~Usai}, {\em A survey of
  applications of second-order sliding mode control to mechanical systems},
  International Journal of control, 76 (2003), pp.~875--892.

\bibitem{Bernuau2013}
{\sc E.~Bernuau, A.~Polyakov, D.~Efimov, and W.~Perruquetti}, {\em Verification
  of {ISS}, {iISS} and {IOSS} properties applying weighted homogeneity}, System
  \& Control Letters, 62 (2013), pp.~1159--1167.

\bibitem{boiko2014relative}
{\sc I.~M. Boiko}, {\em On relative degree, chattering and fractal nature of
  parasitic dynamics in sliding mode control}, Journal of the Franklin
  Institute, 351 (2014), pp.~1939--1952.

\bibitem{Castanos2006}
{\sc F.~Castanos and L.~Fridman}, {\em Analysis and design of integral sliding
  manifolds for systems with unmatched perturbations}, IEEE Transactions on
  Automatic Control, 51 (2006), pp.~853--858,
  \url{https://doi.org/10.1109/TAC.2006.875008}.

\bibitem{Choi1999}
{\sc H.~H. Choi}, {\em On the existence of linear sliding surfaces for a class
  of uncertain dynamic systems with mismatched uncertainties}, Automatica, 35
  (1999), pp.~1707--1715,
  \url{https://doi.org/https://doi.org/10.1016/S0005-1098(99)00081-3}.

\bibitem{cruz2019higher}
{\sc E.~Cruz-Zavala and J.~A. Moreno}, {\em Higher order sliding mode control
  using discontinuous integral action}, IEEE Transactions on Automatic Control,
  65 (2019), pp.~4316--4323.

\bibitem{dashkovskiy2011input}
{\sc S.~Dashkovskiy, D.~V. Efimov, and E.~D. Sontag}, {\em Input to state
  stability and allied system properties}, Automation and Remote Control, 72
  (2011), pp.~1579--1614.

\bibitem{efimov2024equivalence}
{\sc D.~Efimov and A.~Aleksandrov}, {\em On equivalence of
  {L}yapunov-{R}azumikhin conditions and iss for a class of time-delay
  systems}, IEEE Transactions on Automatic Control,  (2024).

\bibitem{Efimov2021}
{\sc D.~Efimov and A.~Polyakov}, {\em Finite-time stability tools for control
  and estimation}, Foundations and Trends in Systems and Control, 9 (2021),
  pp.~171--364, \url{https://doi.org/10.1561/2600000026}.

\bibitem{efimov2016delayed}
{\sc D.~Efimov, A.~Polyakov, L.~Fridman, W.~Perruquetti, and J.-P. Richard},
  {\em Delayed sliding mode control}, Automatica, 64 (2016), pp.~37--43.

\bibitem{Estrada2017}
{\sc A.~Estrada, L.~Fridman, and R.~Iriarte}, {\em Combined backstepping and
  {HOSM} control design for a class of nonlinear {MIMO} systems}, International
  Journal of Robust and Nonlinear Control, 27 (2017), pp.~566--581,
  \url{https://doi.org/https://doi.org/10.1002/rnc.3590}.

\bibitem{FerreiradeLoza2015}
{\sc A.~{Ferreira de Loza}, J.~Cieslak, D.~Henry, A.~Zolghadri, and L.~M.
  Fridman}, {\em Output tracking of systems subjected to perturbations and a
  class of actuator faults based on {HOSM} observation and identification},
  Automatica, 59 (2015), pp.~200--205,
  \url{https://doi.org/https://doi.org/10.1016/j.automatica.2015.06.020}.

\bibitem{gu2003introduction}
{\sc K.~Gu, V.~L. Kharitonov, and J.~Chen}, {\em Introduction to time-delay
  systems}, Stability of Time-Delay Systems,  (2003), pp.~1--28.

\bibitem{Khalil2002}
{\sc H.~K. Khalil}, {\em Nonlinear Systems}, Prentice Hall, third~ed., 2002.

\bibitem{Kolmanovskii1999}
{\sc V.~Kolmanovskii and A.~Myshkis}, {\em Introduction to the Theory and
  Applications of Functional Differential Equations}, vol.~463 of Mathematics
  and Its Applications, Springer, 1999.

\bibitem{kolmanovskii1986stability}
{\sc V.~B. Kolmanovskii and V.~R. Nosov}, {\em Stability of functional
  differential equations}, vol.~180, Elsevier, 1986.

\bibitem{korobov1979general}
{\sc V.~Korobov}, {\em A general approach to synthesis problem}, in Doklady
  Academii Nauk SSSR, vol.~248, 1979, pp.~1051--1063.

\bibitem{Labbadi2024}
{\sc M.~Labbadi and D.~Efimov}, {\em Delayed high order sliding mode control
  using implicit lyapunov function approach}, International Journal of Robust
  and Nonlinear Control,  (2025), \url{https://doi.org/10.1002/rnc.70053},
  \url{https://doi.org/10.1002/rnc.70053}.

\bibitem{lee2007chattering}
{\sc H.~Lee and V.~I. Utkin}, {\em Chattering suppression methods in sliding
  mode control systems}, Annual reviews in control, 31 (2007), pp.~179--188.

\bibitem{levant1993sliding}
{\sc A.~Levant}, {\em Sliding order and sliding accuracy in sliding mode
  control}, International journal of control, 58 (1993), pp.~1247--1263.

\bibitem{hosm:Levant:2003}
{\sc A.~Levant}, {\em Higher-order sliding modes, differentiation and
  output-feedback control}, International Journal of Control, 76 (2003),
  pp.~924--941.

\bibitem{levant2005homogeneity}
{\sc A.~Levant}, {\em Homogeneity approach to high-order sliding mode design},
  Automatica, 41 (2005), pp.~823--830.

\bibitem{Levant2005b}
{\sc A.~Levant}, {\em Quasi-continuous high-order sliding-mode controllers},
  IEEE Transactions on Automatic Control, 50 (2005), pp.~1812--1816,
  \url{https://doi.org/10.1109/TAC.2005.858646}.

\bibitem{Levant2016}
{\sc A.~Levant, D.~Efimov, A.~Polyakov, and W.~Perruquetti}, {\em Stability and
  robustness of homogeneous differential inclusions}, in Proc. IEEE CDC, 2016.

\bibitem{Lopez-Ramirez_etal2018:Aut}
{\sc F.~Lopez-Ramirez, A.~Polyakov, D.~Efimov, and W.~Perruquetti}, {\em
  {Finite-time and fixed-time observer design: Implicit Lyapunov function
  approach}}, Automatica, 87 (2018), pp.~52--60.

\bibitem{MCloskey1997}
{\sc R.~M'Closkey and R.~Murray}, {\em Exponential stabilization of driftless
  nonlinear control systems using homogeneous feedback}, IEEE Transactions on
  Automatic Control, 42 (1997), pp.~614--628,
  \url{https://doi.org/10.1109/9.580865}.

\bibitem{mercado2020discontinuous}
{\sc J.~{\'A}. Mercado-Uribe and J.~A. Moreno}, {\em Discontinuous integral
  action for arbitrary relative degree in sliding-mode control}, Automatica,
  118 (2020), p.~109018.

\bibitem{nekhoroshikh2022hyperexponential}
{\sc A.~N. Nekhoroshikh, D.~Efimov, A.~Polyakov, W.~Perruquetti, and I.~B.
  Furtat}, {\em Hyperexponential and fixed-time stability of time-delay
  systems: {L}yapunov--{R}azumikhin method}, IEEE Transactions on Automatic
  Control, 68 (2022), pp.~1862--1869.

\bibitem{perez2019design}
{\sc U.~P{\'e}rez-Ventura and L.~Fridman}, {\em Design of super-twisting
  control gains: A describing function based methodology}, Automatica, 99
  (2019), pp.~175--180.

\bibitem{perruquetti2008finite}
{\sc W.~Perruquetti, T.~Floquet, and E.~Moulay}, {\em Finite-time observers:
  application to secure communication}, IEEE Transactions on Automatic Control,
  53 (2008), pp.~356--360.

\bibitem{pilloni2012parameter}
{\sc A.~Pilloni, A.~Pisano, and E.~Usai}, {\em Parameter tuning and chattering
  adjustment of super-twisting sliding mode control system for linear plants},
  in 2012 12th International Workshop on Variable Structure Systems, IEEE,
  2012, pp.~479--484.

\bibitem{Ping2025}
{\sc X.~Ping, K.~Zimenko, A.~Polyakov, and D.~Efimov}, {\em Filtering
  homogeneous observer for linear {MIMO} system}, Automatica, 178 (2025),
  p.~112357,
  \url{https://doi.org/https://doi.org/10.1016/j.automatica.2025.112357}.

\bibitem{polyakov2015finite}
{\sc A.~Polyakov, D.~Efimov, and W.~Perruquetti}, {\em Finite-time and
  fixed-time stabilization: Implicit lyapunov function approach}, Automatica,
  51 (2015), pp.~332--340.

\bibitem{Polyakov2016}
{\sc A.~Polyakov, D.~Efimov, and W.~Perruquetti}, {\em Robust stabilization of
  {MIMO} systems in finite/fixed time}, International Journal of Robust and
  Nonlinear Control, 26 (2016), pp.~69--90.

\bibitem{polyakov2015implicit}
{\sc A.~Polyakov, D.~Efimov, W.~Perruquetti, and J.-P. Richard}, {\em Implicit
  {L}yapunov-{K}rasovski functionals for stability analysis and control design
  of time-delay systems}, IEEE Transactions on Automatic Control, 60 (2015),
  pp.~3344--3349.

\bibitem{sanchez2019design}
{\sc T.~Sanchez and J.~A. Moreno}, {\em Design of lyapunov functions for a
  class of homogeneous systems: Generalized forms approach}, International
  Journal of Robust and Nonlinear Control, 29 (2019), pp.~661--681.

\bibitem{shtessel2014sliding}
{\sc Y.~Shtessel, C.~Edwards, L.~Fridman, and A.~Levant}, {\em Sliding Mode
  Control and Observation}, vol.~10 of Control Engineering, Birkh\"auser, New
  York, NY, 2014, \url{https://doi.org/10.1007/978-0-8176-4893-0},
  \url{https://doi.org/10.1007/978-0-8176-4893-0}.
\newblock Includes exercises and real-world case studies.

\bibitem{sontag2007input}
{\sc E.~Sontag}, {\em Input to state stability: Basic concepts and results}, in
  Nonlinear and Optimal Control Theory, P.~P. Nistri and G.~Stefani, eds.,
  Springer-Verlag, Berlin, 2007, pp.~163--220.

\bibitem{fnt:observers:Spurgeon:2008}
{\sc S.~K. Spurgeon}, {\em Sliding mode observers: A survey}, International
  Journal of Systems Science, 39 (2008), pp.~751--764.

\bibitem{teel1998connections}
{\sc A.~R. Teel}, {\em Connections between {R}azumikhin-type theorems and the
  iss nonlinear small gain theorem}, IEEE Transactions on Automatic Control, 43
  (1998), pp.~960--964.

\bibitem{utkin1977variable}
{\sc V.~Utkin}, {\em Variable structure systems with sliding modes}, IEEE
  Transactions on Automatic control, 22 (1977), pp.~212--222.

\bibitem{utkin2017sliding}
{\sc V.~Utkin, J.~Guldner, and J.~Shi}, {\em Sliding mode control in
  electro-mechanical systems}, CRC press, 2017.

\bibitem{Zhang2020}
{\sc C.~Zhang, J.~Yang, Y.~Yan, L.~Fridman, and S.~Li}, {\em Semiglobal
  finite-time trajectory tracking realization for disturbed nonlinear systems
  via higher-order sliding modes}, IEEE Transactions on Automatic Control, 65
  (2020), pp.~2185--2191, \url{https://doi.org/10.1109/TAC.2019.2937853}.

\end{thebibliography}
\end{document}